\documentclass[reqno,11pt]{amsart}
\usepackage[margin=1in]{geometry}
\usepackage{amsmath,amssymb,amsthm,amsfonts}
\usepackage{graphicx}
\usepackage{booktabs}
\usepackage{mathtools}
\usepackage{enumitem}
\usepackage{etoolbox}
\usepackage{hyperref}
\usepackage{multirow}

\newcommand{\p}{\partial}
\newcommand{\R}{\mathbb{R}}

\newcommand{\mrw}{\mathring{w}}
\newcommand{\mrz}{\mathring{z}}
\newcommand{\mrs}{\mathring{s}}

\newcommand{\clip}{\operatorname{clip}_{\pm}}

\theoremstyle{plain}
\newtheorem{theorem}{Theorem}[section]
\newtheorem{lemma}[theorem]{Lemma}
\newtheorem{proposition}[theorem]{Proposition}

\newtheorem{algorithm}{Algorithm}[section]
\newtheorem{remark}{Remark}[section]

\theoremstyle{definition}
\newtheorem{definition}[theorem]{Definition}

\numberwithin{equation}{section}

\usepackage{etoolbox} 

\makeatletter
\patchcmd{\@settitle}{\uppercasenonmath\@title}{}{}{}

\def\@settitle{
  \begin{center}
    \baselineskip14\p@\relax
    \large\bfseries
    \@title
  \end{center}
}
\makeatother

\makeatletter
\def\@setauthors{
  \begingroup
  \trivlist
  \centering\footnotesize \@topsep30\p@\relax
  \advance\@topsep by -\baselineskip
  \item\relax
  \andify\authors
  \def\baselinestretch{1}
  \authors
  \endtrivlist
  \endgroup
}
\makeatother

\title[Sub-cell Wave Reconstruction from DRVs]{Sub-cell Wave Reconstruction from Differentiated Riemann Variables}
\author[Steve Shkoller]{\large Steve Shkoller}

\begin{document}

\begin{abstract}
We introduce a postprocessing procedure that recovers sub-cell wave 
geometry from a standard one-dimensional Euler shock-capturing 
computation using differentiated Riemann variables (DRVs)---characteristic 
derivatives that separate the three wave families into distinct 
localized spikes.  Filtered DRV surrogates detect the waves, 
plateau sampling extracts the local states, and a pressure-wave-function 
Newton closure completes the geometry.  The entire pipeline adds less 
than $0.25\%$ to the cost of a baseline WENO--5/HLLC solve. 
For Sod, a severe-expansion problem, and the LeBlanc shock tube, 
wave locations are recovered to within roundoff or $O(10^{-4})$ 
and the contact is sharpened to one cell width; a pattern-agnostic 
extension handles all four Riemann configurations with errors at the 
$10^{-6}$--$10^{-8}$ level.  Direct comparison with MUSCL--THINC--BVD 
and WENO-Z--THINC--BVD shows that neither reproduces the combination 
of sharp contacts, small contact-window internal-energy error, 
and elimination of the LeBlanc positive overshoot achieved by the 
DRV reconstruction.
\end{abstract}

\maketitle

\setcounter{tocdepth}{1}
\tableofcontents

\section{Introduction}
\label{sec:intro}

High-resolution Godunov-type shock-capturing schemes approximate discontinuous Euler flows on a fixed Eulerian mesh by replacing each jump with a transition layer spread over several cells \cite{Godunov1959,Harten1983}. For mild problems this smearing is often acceptable, but for strong one-dimensional Riemann problems it can obscure the sub-cell wave geometry and can also create visible thermodynamic defects near the contact discontinuity. The severe-expansion and LeBlanc tests are especially unforgiving in this respect \cite{TangLiu2006}: a numerically broadened contact may be accompanied by a conspicuous error in the specific internal energy. The question we address  is whether the lost wave geometry can be recovered \emph{a posteriori} from an already computed conservative solution, at negligible additional cost, and with enough accuracy to remove those thermodynamic defects.

The key observation is that differentiated state variables (i.e., velocity, density,  energy) are not the right objects for this task. For the one-dimensional non-isentropic Euler equations, one should first diagonalize the quasilinear system and only then differentiate. This produces three differentiated characteristic variables associated with the left acoustic wave, the contact discontinuity, and the right acoustic wave. Following the terminology we introduced in \cite{BuckmasterDrivasShkollerVicol2022,ShkollerVicol2024}, these are called differentiated Riemann variables (DRVs). In smooth regions they reduce to explicit algebraic combinations of $\p_x u$, $\p_x p$, and $\p_x s$; at discontinuities they appear numerically as localized spikes approximating the singular Radon measures carried by the three wave families. This family separation is exactly what makes them useful for reconstruction: the $1$-wave, the contact, and the $3$-wave are detected in different variables rather than by a single generic edge sensor.

The reconstruction algorithm is correspondingly simple. A standard finite-volume shock-capturing method evolves the conservative Euler variables. At final time, algebraic DRV surrogates are extracted from the primitive fields, adaptively filtered, and used to infer an initial $1$--$2$--$3$ wave geometry. The four plateau states (far-left, left-star, right-star, far-right) are then sampled from the baseline solution, and the local star region is closed by iterating the classical pressure-wave-function equation
\[
F(p)=f_L(p)+f_R(p)+u_R-u_L
\]
from a sampled-pressure seed. Because the method operates on a single final-time snapshot, the entire postprocessing pipeline---DRV surrogates, filtering, spike detection, plateau sampling, Newton closure, and piecewise reconstruction---adds less than $0.25\%$ to the cost of the baseline solve, while producing qualitative improvements in wave-location accuracy and contact sharpness.

The results are striking. For the three main benchmarks (Sod, a severe-expansion problem, and the LeBlanc shock tube), the reconstruction places wave locations to within roundoff or to the $10^{-4}$ level, sharpens the contact to essentially one cell width, and eliminates the spurious contact-layer internal-energy overshoot that persists in the baseline WENO--5/HLLC solution. Direct comparison with two strong jump-like alternatives---MUSCL--THINC--BVD and WENO-Z--THINC--BVD---shows that neither reproduces the combination of near-discontinuous contacts, small contact-window internal-energy error, and LeBlanc overshoot suppression achieved by the DRV method. A pattern-agnostic extension with adaptive Newton iteration and a residual-based convergence guard handles all four single-interface Riemann configurations, including a near-vacuum two-rarefaction test and a two-shock collision, with wave-location errors at the $10^{-6}$--$10^{-8}$ level.

The paper has two intertwined aims. The mathematical aim is to explain why DRVs are the correct spike-detection variables: they arise from diagonalizing before differentiation, they admit a conservative derivative form, and at a pure contact the acoustic source has no singular contribution. The numerical aim is to show that this information can be converted into a practical and unusually effective reconstruction procedure with negligible computational overhead.

The paper is organized as follows. Section~\ref{sec:scalar} isolates the scalar first-moment mechanism that underlies spike-based wave localization. Section~\ref{sec:why-drvs} explains why DRVs are the correct spike-detection variables for the Euler system and proves that the algebraic surrogates used in the code are exact continuum identities in smooth regions. Section~\ref{sec:strategy} defines the Riemann-solver terminology used throughout, states the reconstruction algorithm, and records the precise formulas needed to reproduce it; it concludes with a comparison to existing reconstruction schemes. Section~\ref{sec:contact-prop} explains why the reconstruction structurally eliminates the contact-layer internal-energy overshoot. Section~\ref{sec:results} reports wave-location errors, conservation defects, a grid study, the three benchmark reconstructions, and the independent comparison study. Section~\ref{sec:generalized} extends the method to all four single-interface Riemann configurations and reports computational costs. Appendix~\ref{app:implementation} records the baseline finite-volume solver, and Appendix~\ref{app:exact-closure} records the optional exact local Riemann closure.

\section{Differentiated fields locate waves: a scalar mechanism}
\label{sec:scalar}

The DRV reconstruction locates each wave by finding a localized spike in the appropriate differentiated characteristic variable and computing its center of mass. Before developing the full Euler theory in Section~\ref{sec:why-drvs}, we isolate the mechanism in the simplest possible setting: a single viscously regularized shock in a scalar conservation law. The point is that the differentiated field $v=\partial_x u$ forms a localized spike whose center of mass propagates at exactly the Rankine--Hugoniot speed. This first-moment property is what makes differentiated fields the right objects for wave localization.

Consider a strictly convex scalar conservation law
\[
\partial_t u + \partial_x f(u)=0,
\]
and let $v=\partial_x u$. For a viscously regularized isolated shock layer,
\begin{equation}
\partial_t v + \partial_x\!\big(f'(u)v\big)=\nu\partial_{xx}v,
\label{eq:scalar-v}
\end{equation}
with $\nu>0$, the differentiated field $v$ is a localized smooth approximation of $[u]\delta_{x_s(t)}$.

For an interval $[x_L,x_R]$ containing the regularized layer, define the mass and center of mass of $v$ by
\[
M(t)=\int_{x_L}^{x_R} v(x,t)\,dx,
\qquad
X_{cm}(t)=\tfrac{\int_{x_L}^{x_R} x\,v(x,t)\,dx}{\int_{x_L}^{x_R} v(x,t)\,dx}.
\]

\begin{lemma}[First-moment transport for a regularized differentiated shock layer]
\label{lem:scalar-moment}
Assume that $[x_L,x_R]$ contains a single viscously regularized shock layer connecting constant endstates $u_L$ and $u_R$, and that $v$, $v_x$, and $f'(u)v$ are exponentially small at $x=x_L,x_R$. Then
\[
M(t)=[u]+O(e^{-c/\nu}),
\qquad
\tfrac{d}{dt}X_{cm}(t)=\tfrac{[f(u)]}{[u]}+O(e^{-c/\nu}).
\]
In particular, the center of mass of the differentiated layer propagates at the Rankine--Hugoniot speed up to exponentially small endpoint leakage.
\end{lemma}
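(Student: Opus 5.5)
The plan is to compute the zeroth and first moments of $v$ directly from the conservative form \eqref{eq:scalar-v}, integrating by parts. Every interior flux term integrates to a boundary evaluation at $x_L,x_R$, which is exponentially small by hypothesis. The only contributions of size $O(1)$ come from integrals of $v$ and of $f'(u)v$, and these are exact differences of endstate values because $v=\p_x u$.

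\textbf{Mass.} Since $v=\p_x u$, the fundamental theorem of calculus gives
\[
M(t)=u(x_R,t)-u(x_L,t).
\]
The layer is exponentially localized: $u$ approaches $u_L,u_R$ at rate $e^{-c|x-x_s|/\nu}$, and the interval boundaries lie an $O(1)$ distance from the layer. Hence $u(x_{L,R},t)=u_{L,R}+O(e^{-c/\nu})$, which gives $M=[u]+O(e^{-c/\nu})$. If one prefers to argue purely from the stated hypotheses, the same conclusion follows dynamically. Integrating \eqref{eq:scalar-v} over $[x_L,x_R]$ gives
\[
M'(t)=-\big[f'(u)v\big]_{x_L}^{x_R}+\nu\big[v_x\big]_{x_L}^{x_R}=O(e^{-c/\nu}),
\]
so $M$ is constant up to exponentially small leakage. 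I will use the first argument, taking the endstate convergence as part of ``connecting constant endstates,'' and note the second as a consistency check.

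\textbf{First moment.} Set $N(t)=\int x\,v\,dx$. Multiplying \eqref{eq:scalar-v} by $x$ and integrating by parts gives
\[
N'(t)=-\big[x f'(u)v\big]+\int_{x_L}^{x_R} f'(u)v\,dx+\nu\big[x v_x\big]-\nu\int_{x_L}^{x_R} v_x\,dx .
\]
The two bracketed terms are boundary terms and are exponentially small, since $x_L,x_R$ are bounded. The last integral equals $\nu[v]$, which is also exponentially small. The key identity is that $f'(u)v=f'(u)u_x=\p_x f(u)$. Therefore
\[
\int_{x_L}^{x_R} f'(u)v\,dx=f(u(x_R))-f(u(x_L))=[f(u)]+O(e^{-c/\nu}),
\]
using the endstate convergence and the Lipschitz continuity of $f$ on the bounded range of $u$.

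\textbf{Quotient rule.} We have $X_{cm}=N/M$, so
\[
\dot X_{cm}=\frac{N'}{M}-\frac{N\,M'}{M^2}.
\]
Here $M=[u]+O(e^{-c/\nu})$ is bounded away from zero, because the shock is nontrivial and $[u]\neq0$. Also $N$ is bounded, since $|N|\le \max(|x_L|,|x_R|)\int|v|$, and $v$ has one sign in a monotone viscous profile so that $\int|v|=|M|$. Finally $M'=O(e^{-c/\nu})$. Together these give
\[
\dot X_{cm}=\frac{[f(u)]}{[u]}+O(e^{-c/\nu}).
\]

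\textbf{Main obstacle.} The delicate point is pinning down the $O(e^{-c/\nu})$ errors uniformly. This requires that $u(x_{L,R})-u_{L,R}$ be exponentially small and that $\int|v|$ be controlled. I would handle both by invoking the standard viscous traveling-wave profile for strictly convex $f$. That profile is monotone and approaches its endstates exponentially with rate proportional to $|f'(u_{L})-\sigma|/\nu$, which yields the constant $c$ in terms of the distance from the layer to $x_L,x_R$. Everything else in the argument is exact integration by parts.
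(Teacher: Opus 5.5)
Your proof is correct and takes the same route as the paper's: the fundamental theorem of calculus for $M$, then one integration by parts per term in $\frac{d}{dt}\int x v\,dx$ using the identity $f'(u)v=\partial_x f(u)$, then the quotient rule. You also make explicit two points the paper leaves implicit: the $L^1$ bound on $v$ that makes the term $N\dot M/M^2$ exponentially small, and the exponential closeness of $u(x_{L,R},t)$ to the endstates, which is an assumption built into ``connecting constant endstates'' rather than a consequence of the endpoint smallness of $v$, $v_x$, and $f'(u)v$.
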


\begin{proof}
Integrating \eqref{eq:scalar-v} over $[x_L,x_R]$ shows that $\dot M(t)$ is an endpoint term. Since $v=\partial_x u$, one has $M(t)=u(x_R,t)-u(x_L,t)=[u]+O(e^{-c/\nu})$. Next,
\[
\tfrac{d}{dt}\int_{x_L}^{x_R}xv\,dx
=
\int_{x_L}^{x_R}x\bigl(-\partial_x(f'(u)v)+\nu\partial_{xx}v\bigr)\,dx.
\]
After one integration by parts in each term, all boundary contributions are exponentially small, while
\[
\int_{x_L}^{x_R}f'(u)v\,dx
=
\int_{x_L}^{x_R}\partial_x f(u)\,dx
=
[f(u)] + O(e^{-c/\nu}).
\]
The quotient rule therefore yields
\[
\tfrac{d}{dt}X_{cm}(t)=\tfrac{[f(u)]}{[u]}+O(e^{-c/\nu}).
\]
\end{proof}

Lemma~\ref{lem:scalar-moment} is the basic reason why the DRV reconstruction works. A numerically captured shock or contact produces a localized spike in the appropriate differentiated field. The mass of that spike encodes the jump magnitude, while its center of mass encodes the wave position. The DRV algorithm (Algorithm~\ref{alg:recon}, Step~4) exploits exactly this structure: it computes the center of mass of the filtered DRV spikes to localize each wave family. The task of Section~\ref{sec:why-drvs} is to identify the differentiated variables---the DRVs---for which this first-moment mechanism extends cleanly to the three wave families of the non-isentropic Euler equations.

\section{DRVs are the correct spike-detection variables}
\label{sec:why-drvs}

\subsection{Symmetric Euler variables and exact DRVs}

For the one-dimensional non-isentropic Euler equations, it is convenient to work with the symmetric variables $(u,\sigma,s)$,
\begin{subequations}
\label{eq:sym}
\begin{align}
    \partial_t u + u\partial_x u + \alpha\sigma\partial_x\sigma &= \tfrac{\alpha}{2\gamma}\sigma^2\partial_x s, \\
    \partial_t \sigma + u\partial_x\sigma + \alpha\sigma\partial_x u &= 0, \\
    \partial_t s + u\partial_x s &= 0,
\end{align}
\end{subequations}
where $\alpha=\tfrac{\gamma-1}{2}$ and $\sigma=c/\alpha$ with $c=\sqrt{\gamma p/\rho}$. We introduce the classical Riemann variables
\[
w=u+\sigma,
\qquad
z=u-\sigma.
\]
Letting $q=(u,\sigma,s)^T$, the system can be written as
\[
\partial_t q + A(q)\partial_x q = 0,
\qquad
A(q)=
\begin{pmatrix}
 u & \alpha\sigma & -\tfrac{\alpha\sigma^2}{2\gamma}\\
 \alpha\sigma & u & 0\\
 0 & 0 & u
\end{pmatrix}.
\]
The wave speeds are
\[
\lambda_3=u+\alpha\sigma,
\qquad
\lambda_2=u,
\qquad
\lambda_1=u-\alpha\sigma,
\]
and a corresponding left-eigenvector matrix is
\[
L(q)=
\begin{pmatrix}
1&1&-\tfrac{\sigma}{2\gamma}\\
1&-1&\tfrac{\sigma}{2\gamma}\\
0&0&1
\end{pmatrix}.
\]
The differentiated state variables are $\partial_x q=(\p_x u,\sigma_x,\p_x s)^T$. The Eulerian differentiated Riemann variables are defined by projecting \emph{before} differentiation becomes singular:
\begin{subequations}
\label{eq:drv-defs}
\begin{align}
\mrw &:= \partial_x w - \tfrac{\sigma}{2\gamma}\partial_x s, \\
\mrz &:= \partial_x z + \tfrac{\sigma}{2\gamma}\partial_x s, \\
\mrs &:= \partial_x s.
\end{align}
\end{subequations}
Equivalently, $(\mrw,\mrz,\mrs)^T=L(q)\partial_x q$.

\subsection{Diagonalize first and  differentiate second}

If one differentiates the state variables first, then one obtains the differentiated state-variable system
\begin{equation}
\partial_t(\partial_x q)+A(q)\partial_x(\partial_x q)=-(\partial_x q)^T\tfrac{\partial A}{\partial q}\,\partial_x q.
\label{eq:dsv-raw}
\end{equation}
The formal temptation is then to multiply by $L(q)$ and treat the result as a characteristic decomposition. At discontinuities, however, this procedure is not canonically defined in the classical distribution space.

\begin{proposition}[Obstruction to post-differentiation diagonalization]
\label{prop:dsv-failure}
Let $q$ be piecewise $C^1$ with a contact discontinuity. Then the formal characteristic decomposition obtained by differentiating first and diagonalizing afterwards is not canonically defined in classical distribution theory: the coefficients $L(q)$ are discontinuous on the jump set, whereas $\partial_x q$ contains singular measures, and the differentiated equations generate products that require choosing representatives of discontinuous coefficients on the support of those measures. In particular, terms of the type $H\delta$, $H\delta'$, and $\delta\cdot\delta$ arise formally.
\end{proposition}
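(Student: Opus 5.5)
The plan is to make the obstruction explicit by computing a model case. Take $q$ piecewise $C^1$ with a single jump at $x=x_c(t)$, and write locally
\[
q = q_- + [q]\,H(x-x_c) + \tilde q,
\]
with $\tilde q$ continuous and piecewise $C^1$. At a contact $[u]=0$ and $[p]=0$, while $[\sigma]\neq0$ and $[s]\neq0$ (with $\sigma^2 e^{-s/\gamma}$, up to constants, continuous across the jump since $p$ is). Hence
\[
\partial_x q = [q]\,\delta_{x_c} + \text{(bounded piecewise continuous part)},
\]
with $[q]=(0,[\sigma],[s])^T$. First step: examine $L(q)\partial_x q$. The third column of $L$ contains $\mp\sigma/(2\gamma)$, which jumps at $x_c$. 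The product $\sigma\,\partial_x s$ therefore contains the term $\sigma\,[s]\,\delta_{x_c}$. This is of type $H\delta$, and its value $\sigma(x_c)[s]$ depends on which representative of $\sigma$ one takes at $x_c$: $\sigma_-$, $\sigma_+$, the average, or something else. To show the dependence is genuine, regularize by mollification, $q^\varepsilon = q*\eta_\varepsilon$, and compute the weak limit of $\sigma^\varepsilon\partial_x s^\varepsilon$. It equals $\bigl(\int \theta_\sigma\,d\theta_s\bigr)[s]\delta$, where $\theta_\sigma$ and $\theta_s$ are the respective profile shapes. The coefficient depends on the regularizing path: it equals the average when $\sigma$ and $s$ share a profile, but differs in general, for instance under the pressure constraint which forces nonlinearly related profiles. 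This is the standard Volpert/Dal Maso--LeFloch--Murat nonuniqueness.

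Second step: pass to the differentiated system \eqref{eq:dsv-raw}. The left side contains $A(q)\partial_x(\partial_x q)$. Here $\partial_x^2 q$ contains $[q]\delta'$, and $A$ has discontinuous entries: $\alpha\sigma$, and $-\alpha\sigma^2/(2\gamma)$ in the $(1,3)$ slot. So there are products $H\delta'$. These are undefined too, since after mollification $\int \theta\,\eta_\varepsilon'\,\varphi$ contributes $O(1/\varepsilon)$ and representative-dependent $\delta$ terms, with no canonical limit. On the right side, the quadratic form $(\partial_x q)^T(\partial A/\partial q)\partial_x q$ contains terms such as $\sigma_x s_x$ and $\sigma_x\sigma_x$ (from $\partial_\sigma$ of the $\sigma^2$ entry). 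Their singular parts are $[\sigma][s]\delta\cdot\delta$, which diverge like $\varepsilon^{-1}$ under mollification. Multiplying by the discontinuous $L(q)$ compounds this, giving $H\cdot\delta\cdot\delta$.

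The main obstacle is making "not canonically defined" rigorous without proving a general impossibility theorem. I would proceed by exhibiting two admissible regularization families $q^\varepsilon$ and $\hat q^\varepsilon$, both converging to $q$ in $L^1_{loc}$ with bounded variation and both respecting continuity of $p$ across the layer. I would then show that $L(q^\varepsilon)\partial_x q^\varepsilon$ and $L(\hat q^\varepsilon)\partial_x\hat q^\varepsilon$ have different distributional limits. It suffices to compute the $\delta$ coefficient of the $\mrw$ component, $\int\sigma^\varepsilon\partial_x s^\varepsilon$ against a test function, for two profile pairs. The $\delta\cdot\delta$ and $H\delta'$ terms diverge outright, so no limit exists at all. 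Together these two facts establish the proposition as stated.
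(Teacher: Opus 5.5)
The step you call sufficient would fail. That step asks for two regularizations that both keep $p$ continuous through the layer yet give different $\delta$-coefficients for $\sigma^\varepsilon\partial_x s^\varepsilon$ in the $\mrw$ component.

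You note yourself that $\sigma^2e^{-s/\gamma}$ depends on $p$ alone. So if $p^\varepsilon$ is constant across the layer, then $\sigma^\varepsilon=Ke^{s^\varepsilon/(2\gamma)}$ with one fixed $K$. Hence $\sigma^\varepsilon\partial_x s^\varepsilon=2\gamma\,\partial_x\sigma^\varepsilon\to 2\gamma[\sigma]\delta_{x_c}$ for \emph{every} profile. In fact $\mrw^\varepsilon=\partial_x u^\varepsilon+\partial_x\sigma^\varepsilon-\tfrac{\sigma^\varepsilon}{2\gamma}\partial_x s^\varepsilon=\partial_x u^\varepsilon=0$, and likewise $\mrz^\varepsilon=0$. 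The pressure constraint does not create path dependence; it removes it, because along the admissible path the nonconservative product is an exact derivative. Your ``average versus pressure-constrained'' contrast gives two different limits only if one regularization lets $p^\varepsilon$ vary inside the layer, as componentwise mollification of $(\sigma,s)$ with a common profile does.

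More fundamentally, $L(q)\partial_x q$ \emph{is} the DRV vector $(\mrw,\mrz,\mrs)^T$, and the paper treats it as canonically defined at a contact. Lemma~\ref{lem:alg-identity} rewrites it as $\partial_x u\pm\tfrac{\alpha\sigma}{\gamma p}p_x$, Proposition~\ref{prop:family-selectivity}(ii) gives $\mrw=\mrz=0$ there, and the paper's proof ends by saying the DRVs avoid the ambiguity. A witness built on that product would undercut the paper's thesis rather than prove the proposition. The obstruction lives in the differentiated equations, not in $L(q)\partial_x q$.

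What survives is the paper's own argument, which is purely formal. It writes $q=q^-+[q]H(x-x_c)+r$, reads off $H\delta'$ from $L(q)A(q)q_{xx}$, reads off $\delta\cdot\delta$ from the term $L_x(q)q_x$ produced by expanding $\partial_x(L(q)q_x)$, and stops. Your second step matches this, except that you take the $\delta\cdot\delta$ from the quadratic form instead. (There, $\sigma_x\sigma_x$ comes from the entry $\alpha\sigma$, while the $\sigma^2$ entry gives $\sigma_xs_x$.)

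If you want rigor, impose the pressure constraint on these terms. At the contact, the first rows of $(\partial_x q)^T\partial_qA\,\partial_x q$, $L_x q_x$ and $L(q)q_{xx}$ reduce to $-\tfrac{\alpha\sigma^2}{4\gamma^2}s_x^2$, $-\tfrac{\sigma}{4\gamma^2}s_x^2$ and $\tfrac{\sigma}{4\gamma^2}s_x^2$. Each is of order $\varepsilon^{-1}$ for every profile, so none has a distributional limit. They cancel in the full combination, since a smooth pressure-constant entropy layer advected at constant $u$ is an exact Euler solution and satisfies \eqref{eq:drv-cons} with $\mathcal S=0$. That is exactly the point: the term-by-term post-differentiation decomposition is ill-defined, while the DRVs are not. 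Drop the $\mrw$-coefficient step and let these divergences carry the proof.
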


\begin{proof}
Near a contact at $x=x_c$, write locally
\[
q(x)=q^-+[q]H(x-x_c)+r(x),
\]
with $r\in C^1$. Then
\[
q_x=[q]\delta_{x_c}+r_x,
\qquad
q_{xx}=[q]\delta'_{x_c}+r_{xx}.
\]
Since $L=L(q)$ depends smoothly on $q$, the jump in $q$ induces a jump in $L$:
\[
L(q)=L^-+[L]H(x-x_c)+\widetilde r(x).
\]
If one first differentiates and then multiplies by $L(q)$, the advective term contains factors such as $L(q)A(q)q_{xx}$, hence formally terms of type $H\delta'$. Likewise, expanding $\partial_x(L(q)q_x)$ produces $L_x(q)q_x$, and because $L_x$ carries a Dirac mass while $q_x$ already carries a Dirac mass, one formally obtains $\delta\cdot\delta$. These products are not canonically defined in the Schwartz distribution space \cite{Schwartz1954}. The pre-diagonalized DRV variables \eqref{eq:drv-defs} avoid this ambiguity.
\end{proof}

\subsection{Conservative derivative form and local contact orthogonality}

For the DRVs, the exact Eulerian differentiated system can be rewritten in conservative derivative form as
\begin{subequations}
\label{eq:drv-cons}
\begin{align}
\partial_t \mrw + \partial_x(\lambda_3\mrw) &= \mathcal S(\mrw,\mrz,\mrs), \\
\partial_t \mrz + \partial_x(\lambda_1\mrz) &= -\mathcal S(\mrw,\mrz,\mrs), \\
\partial_t \mrs + \partial_x(u\mrs) &= 0,
\end{align}
\end{subequations}
where
\begin{equation}
\mathcal S(\mrw,\mrz,\mrs)=\tfrac{\alpha}{4\gamma}\sigma(\mrw+\mrz)\mrs.
\label{eq:drv-source}
\end{equation}
Because $\mrw+\mrz=2\p_x u$ by \eqref{eq:drv-defs}, the source term may equivalently be written as
\[
\mathcal S=\tfrac{\alpha}{2\gamma}\sigma\,\p_x u \p_x s.
\]
Thus $\mrw$ is the 3-family differentiated field, $\mrz$ is the 1-family differentiated field, and $\mrs$ is the 2-family differentiated field; the only coupling is through the source term \eqref{eq:drv-source}.

At a pure contact of a self-similar Riemann solution, $u$ and $p$ are constant across the jump while $s$ jumps. Therefore the entropy differentiated field can produce a contact spike without automatically creating an acoustic spike there.

\begin{proposition}[Local orthogonality at a pure contact]
\label{prop:contact-orth}
Consider a self-similar Riemann solution in a neighborhood of an isolated contact discontinuity, separated from acoustic waves. Then $u$ and $p$ are constant in that neighborhood, while $s$ has a jump. Hence
\[
\partial_x u=0,
\qquad
\partial_x p=0,
\qquad
\partial_x s=[s]\delta_{x_c}
\]
in the distributional sense near the contact. Consequently the source term \eqref{eq:drv-source} has no singular contribution supported at the contact, and the contact singularity does not generate a singular acoustic source in $\mrw$ or $\mrz$.
\end{proposition}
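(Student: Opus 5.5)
The plan is to start from the classical structure of self-similar Riemann solutions, then translate the conclusion into the DRV variables \eqref{eq:drv-defs}, and finally evaluate the source \eqref{eq:drv-source} on a neighborhood of the contact.

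\textbf{Step 1 (constancy of $u$ and $p$).} The solution depends on $\xi=x/t$ and consists of constant states separated by the $1$-, $2$- and $3$-waves. I will fix a neighborhood $N$ of $x_c$ that meets no acoustic fan or shock. On $N$ the solution is then piecewise constant, with a single jump at $x_c$. The Rankine--Hugoniot relations for a $2$-wave, or equivalently the fact that $u$ and $p$ are $2$-Riemann invariants, give $[u]=[p]=0$ across $x_c$. Hence $u$ and $p$ are constant on all of $N$, while $s$ takes two constant values $s^\pm$ with $[s]\neq 0$. Differentiating $s=s^-+[s]H(x-x_c)$ in $\mathcal D'(N)$ yields the three displayed identities.

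\textbf{Step 2 (DRVs near the contact).} Since $\sigma=c/\alpha$ with $c^2=\gamma p/\rho$ and $p$ constant, $\sigma$ itself jumps at $x_c$: $\rho$ jumps through $s$. Because of this, I will not compute $\p_x w=\p_x u+\p_x\sigma$ naively. Instead I will use the pressure form. With $p\propto \rho^\gamma e^{s}$ (the normalization implicit in the factor $\sigma/(2\gamma)$ in $L(q)$), one has $\sigma\propto p^{\alpha/\gamma}e^{s/(2\gamma)}$. In smooth regions this gives $\p_x\sigma-\tfrac{\sigma}{2\gamma}\p_x s=\tfrac{\alpha\sigma}{\gamma p}\p_x p$, so that
\[
\mrw=\p_x u+\tfrac{\alpha\sigma}{\gamma p}\p_x p,\qquad \mrz=\p_x u-\tfrac{\alpha\sigma}{\gamma p}\p_x p .
\]
These are exactly the algebraic surrogates of the paper. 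They involve a discontinuous coefficient $\sigma$ multiplying $\p_x p$, but here $\p_x p=0$ identically on $N$ as a distribution. The product is therefore unambiguous and equals $0$, and likewise $\mrw+\mrz=2\p_x u=0$ on $N$.

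\textbf{Step 3 (source term).} Writing $\mathcal S=\tfrac{\alpha}{2\gamma}\sigma\,\p_x u\,\p_x s$, the factor $\p_x u$ vanishes identically on $N$, not just off $x_c$. The product $\sigma\,\p_x u\,\p_x s$ is then the zero distribution, whatever representative of $\sigma$ is chosen at $x_c$. This contrasts with the $\delta\cdot\delta$ and $H\delta$ products of Proposition~\ref{prop:dsv-failure}. So $\mathcal S$ carries no mass at $x_c$, and the equations for $\mrw$ and $\mrz$ in \eqref{eq:drv-cons} have no singular forcing there. This is the claimed orthogonality.

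\textbf{Main obstacle.} The delicate point is Step 2. Here the identity $\mrw=\p_x u+\tfrac{\alpha\sigma}{\gamma p}\p_x p$ must be justified as the meaning of $\mrw$ across a jump of $\sigma$, since $\tfrac{\sigma}{2\gamma}\p_x s$ alone is a product $H\delta$. I would take the pressure form as the definition on $N$: it agrees with \eqref{eq:drv-defs} on each smooth side, and its coefficients multiply only the vanishing distributions $\p_x u$ and $\p_x p$. A consistency check supports this choice. Under any regularization in which $u$ and $p$ stay constant, the algebraic relation above holds pointwise, so the regularized $\mrw$ and $\mrz$ vanish identically and converge to $0$.
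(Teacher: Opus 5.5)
Your proof is correct and follows the paper's argument: because $u$ and $p$ take equal values on both sides of the contact, $\p_x u=\p_x p=0$ distributionally and $\p_x s=[s]\delta_{x_c}$, and the vanishing factor $\p_x u$ in $\mathcal S=\tfrac{\alpha}{2\gamma}\sigma\,\p_x u\,\p_x s$ removes any singular contribution at $x_c$. Your Step~2 goes beyond the paper's proof: it resolves the $H\delta$ ambiguity in $\mrw$ and $\mrz$ themselves via the pressure form of Lemma~\ref{lem:alg-identity} (equivalently, by using the logarithmic mean of $\sigma^\pm$ as the value of $\sigma$ at $x_c$), and this is exactly what Proposition~\ref{prop:family-selectivity}(ii) needs when it cites this result for $\mrw=\mrz=0$.
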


\begin{proof}
For a self-similar Riemann solution $U(x,t)=\widetilde U(x/t)$, an isolated contact is represented by two constant intermediate states $U_*^-$ and $U_*^+$ separated at the contact speed $u_*$, with $u_*^-=u_*^+$ and $p_*^-=p_*^+$. Therefore $u$ and $p$ are constant on both sides with equal left and right values, so they are distributionally constant near the contact and their spatial derivatives vanish there. By contrast, $s$ is piecewise constant with a jump, hence $\partial_x s=[s]\delta_{x_c}$. Since $\partial_x u=0$ in that neighborhood, the product form $\mathcal S=\tfrac{\alpha}{2\gamma}\sigma\,\p_x u \p_x s$ has no singular contribution supported at the contact.
\end{proof}

\subsection{Exact algebraic identities and family selectivity}

The code does not evolve the DRV PDEs \eqref{eq:drv-cons}; it extracts final-time algebraic surrogates from the primitive fields. The important point is that these algebraic surrogates are not heuristic. They are exact continuum identities in smooth regions.

\begin{lemma}[Exact algebraic form of the smooth DRVs]
\label{lem:alg-identity}
Assume $\rho>0$ and $p>0$ are smooth. Then the exact DRVs satisfy
\begin{subequations}
\label{eq:alg-drvs-cont}
\begin{align}
\mrw &= \p_x u + \tfrac{\alpha\sigma}{\gamma p}p_x, \\
\mrz &= \p_x u - \tfrac{\alpha\sigma}{\gamma p}p_x, \\
\mrs &= \p_x s.
\end{align}
\end{subequations}
Hence the code formulas are obtained simply by replacing spatial derivatives in \eqref{eq:alg-drvs-cont} by centered differences.
\end{lemma}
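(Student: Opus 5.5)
We verify \eqref{eq:alg-drvs-cont} by a direct chain-rule computation starting from the definitions \eqref{eq:drv-defs}. Since $w=u+\sigma$ and $z=u-\sigma$, these read $\mrw=\p_x u+\sigma_x-\tfrac{\sigma}{2\gamma}\p_x s$ and $\mrz=\p_x u-\sigma_x+\tfrac{\sigma}{2\gamma}\p_x s$. It therefore suffices to prove the single identity
\[
\sigma_x-\tfrac{\sigma}{2\gamma}\p_x s=\tfrac{\alpha\sigma}{\gamma p}\,p_x ,
\]
because the $\mrz$ formula is its negative added to $\p_x u$, and $\mrs=\p_x s$ holds by definition.

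To prove this identity, we express $\sigma$ in terms of $(p,s)$. We use the entropy normalization implicit in the symmetric system \eqref{eq:sym}, namely $p=\rho^\gamma e^{s}$ (equivalently $\rho=p^{1/\gamma}e^{-s/\gamma}$). Then
\[
c^2=\gamma p/\rho=\gamma p^{(\gamma-1)/\gamma}e^{s/\gamma},
\]
so that
\[
\log\sigma=\text{const}+\tfrac{\gamma-1}{2\gamma}\log p+\tfrac{1}{2\gamma}s .
\]
Differentiating in $x$, which is legitimate since $\rho,p>0$ are smooth, gives
\[
\sigma_x=\sigma\bigl(\tfrac{\alpha}{\gamma}\tfrac{p_x}{p}+\tfrac{1}{2\gamma}s_x\bigr),
\]
using $\tfrac{\gamma-1}{2\gamma}=\tfrac{\alpha}{\gamma}$. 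Subtracting $\tfrac{\sigma}{2\gamma}s_x$ cancels the entropy term exactly and leaves $\tfrac{\alpha\sigma}{\gamma p}p_x$, as claimed.

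The only real point to check is the entropy normalization. The coefficient $-\tfrac{\sigma}{2\gamma}$ in $L(q)$, and the right-hand side $\tfrac{\alpha}{2\gamma}\sigma^2\p_x s$ in \eqref{eq:sym}, fix $s$ so that $p=\rho^\gamma e^{s}$, up to an additive constant in $s$, which is harmless. We would confirm this by rederiving the momentum equation. From $\rho^{-1}p_x$ together with $c^2=\alpha^2\sigma^2$ and $\tfrac{p_x}{p}=\tfrac{2\gamma}{\gamma-1}\tfrac{\sigma_x}{\sigma}-\tfrac{s_x}{\gamma-1}$, one obtains
\[
\tfrac{p_x}{\rho}=\alpha\sigma\sigma_x-\tfrac{\alpha}{2\gamma}\sigma^2 s_x ,
\]
which matches \eqref{eq:sym}. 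This consistency check is what guarantees that the cancellation is exact.

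The final sentence of the lemma is then immediate. The right-hand sides of \eqref{eq:alg-drvs-cont} involve only $u$, $p$, $s$ (with $\sigma$ computed from $\rho$ and $p$) and their first derivatives. Replacing $\p_x$ by centered differences therefore yields the code surrogates, consistent to second order wherever the solution is smooth.
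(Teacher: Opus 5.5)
Your proof is correct and is essentially the paper's argument: both fix the normalization $s=\log p-\gamma\log\rho$, log-differentiate $\sigma$ to get $\sigma_x/\sigma=\tfrac{\alpha}{\gamma}\tfrac{p_x}{p}+\tfrac{1}{2\gamma}\p_x s$, and then cancel the entropy term in the definitions of $\mrw$ and $\mrz$. The only addition is your consistency check against the momentum equation in \eqref{eq:sym}, which the paper skips by simply asserting the normalization (it matches $s_j=\log(p_j/\rho_j^{\gamma})$ in the appendix); it confirms the scaling of $s$ but does not change the route.
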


\begin{proof}
From
\[
s=\log p - \gamma \log \rho,
\qquad
\sigma=\alpha^{-1}\sqrt{\gamma p/\rho},
\]
we obtain
\[
\tfrac{\sigma_x}{\sigma}
=
\tfrac12\tfrac{p_x}{p}-\tfrac12\tfrac{\rho_x}{\rho}
=
\tfrac12\tfrac{p_x}{p}-\tfrac{1}{2\gamma}\Bigl(\tfrac{p_x}{p}-\p_x s\Bigr)
=
\tfrac{\alpha}{\gamma}\tfrac{p_x}{p}+\tfrac{1}{2\gamma}\p_x s.
\]
Therefore
\[
\mrw=\p_x u+\sigma_x-\tfrac{\sigma}{2\gamma}\p_x s=\p_x u+\tfrac{\alpha\sigma}{\gamma p}p_x,
\]
and similarly
\[
\mrz=\p_x u-\sigma_x+\tfrac{\sigma}{2\gamma}\p_x s=\p_x u-\tfrac{\alpha\sigma}{\gamma p}p_x.
\]
The identity $\mrs=\p_x s$ is immediate from the definition.
\end{proof}

Lemma~\ref{lem:alg-identity} already explains why the implemented formulas are the correct ones. The next proposition explains why they are also family-selective.

\begin{proposition}[Family selectivity for the 1-wave and 2-wave detectors]
\label{prop:family-selectivity}
Consider an exact self-similar 1-rarefaction/2-contact/3-shock Riemann pattern in regions where the waves are separated.
\begin{enumerate}[label=\textnormal{(\roman*)},leftmargin=*]
    \item In the interior of the left 1-rarefaction, $\p_x s=0$ and $w$ is constant. Hence $\mrs=0$, $\mrw=0$, and $\mrz=z_x$.
    \item Near the isolated contact, $\mrw=0$ and $\mrz=0$ in the distributional sense, while $\mrs=[s]\delta_{x_c}$.
\end{enumerate}
Thus the rarefaction is naturally detected by $\mrz$, whereas the contact is naturally detected by $\mrs$.
\end{proposition}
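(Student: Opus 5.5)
The plan is to treat the two parts separately. Part (i) uses the classical structure of a centered 1-rarefaction in the smooth region, combined with Lemma~\ref{lem:alg-identity} and the definitions \eqref{eq:drv-defs}. Part (ii) is essentially a restatement of Proposition~\ref{prop:contact-orth} in DRV language.

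For (i), I first recall that a 1-rarefaction in an exact Riemann solution is a simple wave emanating from a state with constant entropy. Across the rarefaction fan the solution is smooth ($C^1$ away from the bounding characteristics, Lipschitz overall) and isentropic, so $s$ is constant there and $\p_x s=0$; this gives $\mrs=0$ at once.

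I then need that $w=u+\sigma$ is constant. With $\p_x s=0$, system \eqref{eq:sym} reduces to the isentropic $2\times2$ system. In that system $w$ and $z$ satisfy
\[
\p_t w+\lambda_3\p_x w=0,\qquad \p_t z+\lambda_1\p_x z=0.
\]
The equation for $w$ follows by adding the first two equations of \eqref{eq:sym}, since the $\p_x s$ term vanishes. The 1-rarefaction connects the constant left state to the left-star state through the 1-family. So the 3-characteristics cross the fan coming from the constant left state, which means $w$ is constant, equal to $w_L$. This is the standard simple-wave (generalized Riemann invariant) property. Concretely, along the self-similar variable $\xi=x/t$, the ODE for $\widetilde U(\xi)$ forces $\widetilde U'$ to be parallel to the right eigenvector $r_1$. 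One checks that $L$'s first row annihilates $r_1$, so $(w-\tfrac{\sigma}{2\gamma}s)'=0$ in the sense $\mrw=0$, and with $s'=0$ this gives $w'=0$.

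Given $\p_x w=0$ and $\p_x s=0$, the definition \eqref{eq:drv-defs} gives $\mrw=0$ and $\mrz=z_x$. The fan edges are only Lipschitz, so I restrict to the interior as the statement does.

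For (ii), I invoke Proposition~\ref{prop:contact-orth}: near the contact $\p_x u=0$ and $\p_x p=0$ distributionally, and $\p_x s=[s]\delta_{x_c}$. I cannot use the algebraic form of Lemma~\ref{lem:alg-identity}, because it is only valid in smooth regions, and I cannot naively use \eqref{eq:drv-defs} either, since it multiplies the discontinuous $\sigma$ by $\delta$. The key step, and the main obstacle, is therefore to give $\mrw,\mrz$ a canonical meaning at the contact.

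My approach is to write $\mrw=\p_x u+\bigl(\sigma_x-\tfrac{\sigma}{2\gamma}s_x\bigr)$ and to observe that, by the identity in the proof of Lemma~\ref{lem:alg-identity}, $\sigma_x-\tfrac{\sigma}{2\gamma}s_x=\tfrac{\alpha\sigma}{\gamma p}p_x$. On the contact, $\sigma$ and $s$ are linked through $\sigma^2 \propto p\,e^{s/\gamma}$ with $p$ constant. Hence $\sigma=C e^{s/(2\gamma)}$, i.e. $\log\sigma-\tfrac{1}{2\gamma}s$ is constant across the contact. So the combination is the derivative of a function of $(p,\cdot)$ that is constant in the neighbourhood, which is the canonical (chain-rule) interpretation along the Hugoniot locus of the 2-wave. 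Equivalently, I would regularize by mollifying $s$ at fixed $u,p$: for each mollification $\mrw^\epsilon=\p_x u^\epsilon+\tfrac{\alpha\sigma^\epsilon}{\gamma p}p^\epsilon_x\equiv0$, so the distributional limit is $0$, and similarly $\mrz=0$. Finally, $\mrs=\p_x s=[s]\delta_{x_c}$ directly, which completes the proof and yields the concluding selectivity remark.
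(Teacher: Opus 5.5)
Your proof is correct and follows the paper's route. For (i), the paper likewise just invokes that a 1-simple wave is isentropic with $w=u+\sigma$ constant, and reads off $\mrs=0$, $\mrw=0$, $\mrz=z_x$ from \eqref{eq:drv-defs}; for (ii), it simply cites Proposition~\ref{prop:contact-orth}. Your regularization at fixed $u,p$, which makes $\sigma=C(p)e^{s/(2\gamma)}$ and hence $\mrw^\epsilon=\mrz^\epsilon\equiv 0$, supplies a step the paper leaves implicit: that proposition only gives $\p_x u=p_x=0$, and concluding $\mrw=\mrz=0$ needs exactly this interpretation of $\sigma\,\p_x s$ at the jump. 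One harmless slip: $\sigma^2\propto p^{(\gamma-1)/\gamma}e^{s/\gamma}$, not $p\,e^{s/\gamma}$.
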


\begin{proof}
In a smooth 1-simple wave, the entropy is constant and the 1-family Riemann invariant $w=u+\sigma$ is constant. Hence $\p_x s=0$ and $w_x=0$, so by \eqref{eq:drv-defs} one has $\mrs=0$ and $\mrw=0$, while $\mrz=z_x$. Part~(ii) is exactly Proposition~\ref{prop:contact-orth}.
\end{proof}

The remaining 3-wave is the right-going acoustic family. Since $\mrw$ is the differentiated field transported by the 3-family equation in \eqref{eq:drv-cons}, it is the family-adapted shock detector. In the implemented shock-tube experiments, the filtered $\mrw$ field exhibits a dominant negative compression spike, and the code uses that dominant negative spike to locate the shock.

\begin{table}[t]
\centering
\small
\resizebox{\textwidth}{!}{%
\begin{tabular}{llll}
\toprule
wave feature & local structure & active detector & implemented marker \\
\midrule
1-rarefaction & $\p_x s=0$, $w_x=0$, $z_x\neq0$ & $\mrz$ & extrema of $D_0(G\mrz)$ \\
2-contact & $\p_x u=p_x=0$, $\p_x s=[s]\delta$ & $\mrs$ & dominant spike of $|G\mrs|$ \\
3-shock / compression & right-going acoustic concentration & $\mrw$ & dominant negative spike of $G\mrw$ \\
\bottomrule
\end{tabular}}
\caption{Why the three algebraic DRV surrogates are the correct spike detectors for the implemented 1--2--3 wave pattern.}
\label{tab:detectors}
\end{table}

\section{The DRV-guided reconstruction strategy}
\label{sec:strategy}

The implemented method has two levels. The conservative Euler solve determines the macroscopic weak solution. The DRV fields then extract a family-separated geometric skeleton of the $1$--$2$--$3$ wave pattern. The primitive reconstruction is built from that geometric skeleton rather than directly from the smeared conservative profile.

\subsection{Terminology}
Before stating the algorithm, we fix the terminology and notation from the classical theory of the Riemann problem \cite{Toro2009} that the reconstruction relies on.

\begin{definition}[Star region and star states] The exact self-similar solution of a one-dimensional ideal-gas Riemann problem produces three waves emanating from the initial discontinuity: a left-going acoustic wave (the $1$-wave, either a rarefaction fan or a shock), a contact discontinuity (the $2$-wave), and a right-going acoustic wave (the $3$-wave). Between the $1$-wave and the $3$-wave lies the \emph{star region}, in which the velocity $u$ and the pressure $p$ take common values $u_*$ and $p_*$ on both sides of the contact, while the density $\rho$ and the specific entropy $s$ jump across it. The pair $(u_*,p_*)$, together with the left and right star densities $\rho_{*L}$ and $\rho_{*R}$, constitutes the \emph{star state}.
\end{definition} 

\begin{definition}[Plateaus and plateau sampling] In a numerically computed Euler solution, each constant-state region of the exact solution appears as an approximately flat region over several mesh cells. We call these flat regions \emph{plateaus}. There are four of them, separated by the three waves: the far-left plateau (carrying the original left data), the left-star plateau (between the $1$-wave and the contact), the right-star plateau (between the contact and the $3$-wave), and the far-right plateau (carrying the original right data). \emph{Plateau sampling} refers to the extraction of representative primitive-variable values $(\rho,u,p)$ from each plateau by taking trimmed medians over the interior cells of the corresponding interval.
\end{definition} 

\begin{definition}[Pressure-wave-function closure] The classical exact Riemann solver determines $p_*$ as the root of the \emph{pressure-wave-function equation}
\begin{equation}
\label{eq:pressure-wave-function}
F(p)=f_L(p)+f_R(p)+u_R-u_L = 0,
\end{equation}
where $u_L$ and $u_R$ are the sampled far-left and far-right velocities, and $f_L$ and $f_R$ are the left and right \emph{pressure functions}. Each $f_k(p)$ encodes the velocity change across the $k$-side acoustic wave as a function of a trial star pressure $p$: it takes the rarefaction form when $p\le p_k$ and the shock form when $p>p_k$ (the precise formulas are recorded in Section~\ref{sec:plateau-closure} below). In the present algorithm, $u_L$, $u_R$, $p_L$, and $p_R$ are not the original initial data; they are the \emph{sampled} far-left and far-right plateau values extracted from the DRV-guided geometry.
\end{definition} 

\subsection{DRV reconstruction algorithm}
At the conceptual level, the procedure is the following.

\begin{algorithm}[DRV-guided postprocessing and reconstruction]
\label{alg:recon}
\leavevmode\\
Given benchmark data $(\gamma,N,t_{\mathrm{final}},x_*,\text{left data},\text{right data})$, perform the following steps.
\begin{enumerate}[label=\textsf{Step \arabic*.},leftmargin=*]
    \item Solve the conservative Euler equations on the uniform mesh with the baseline primitive-variable WENO--5 / HLLC / SSP--RK3 method recorded in Appendix~\ref{app:implementation}.
    \item Convert the final conservative state to primitive variables, compute the algebraic DRV surrogates, and build the adaptive wave sensor.
    \item Apply the adaptive Gaussian filter to $\mrw$, $\mrz$, $\mrs$, and the positive velocity-gradient detector.
    \item Detect the initial shock, contact, rarefaction head, and rarefaction tail from the filtered spikes via background-subtracted center-of-mass localization.
    \item Sample the far-left, left-star, right-star, and far-right plateau states by trimmed medians.
    \item Perform two sampled-pressure refinement passes: from the sampled far-left and far-right states and the sampled star-side pressures, form a local pressure equation $F(p)=f_L(p)+f_R(p)+u_R-u_L$, take one Newton correction, recover the common contact speed and star states, and update the wave geometry.
    \item Reconstruct $u_\sharp$, $p_\sharp$, and $s_\sharp$ piecewise, use simple-wave interpolation in the fan, and recover $\rho_\sharp$ and $e_\sharp$ from the equation of state.
\end{enumerate}
\end{algorithm}

\begin{remark}[Current wave-pattern scope]
\label{rem:wave-pattern-scope}
The present implementation assumes the classical $1$-rarefaction/$2$-contact/$3$-shock ordering used by the three benchmark problems. Extending the same DRV localization idea to the other one-dimensional ideal-gas Riemann configurations is straightforward in principle: one would keep Steps~1--5 unchanged and alter only the branch logic in Step~6 together with the associated reconstruction formulas. Section~\ref{sec:generalized} demonstrates this extension on four additional test problems covering all four wave-pattern configurations.
\end{remark}

The key conceptual point is that the DRV stage is not being asked to determine all wave positions exactly by itself. Its role is to provide a physically structured, family-selective, sub-cell initialization of the wave geometry. Plateau sampling and the local star-state closure then complete that geometry.

\subsection{Role of the local star-state closure}
\label{sec:local-riemann-closure}

The use of a local star-state closure in Step~6 deserves a brief explanation. The algorithm is not reusing the known exact self-similar solution of the original shock-tube benchmark. The geometric information comes first from the DRV stage: the filtered spikes identify the $1$-wave, the contact, and the $3$-wave and thereby determine the windows from which the plateau states are sampled. In the main implementation, the sampled far-left and far-right states define the pressure equation
\[
F(p)=f_L(p)+f_R(p)+u_R-u_L,
\]
while the sampled left- and right-star pressures are used to seed the initial guess
\[
p_*^{(0)}=\tfrac12\bigl(p_{*,L}^{\mathrm{samp}}+p_{*,R}^{\mathrm{samp}}\bigr).
\]
The current code then takes one Newton correction
\[
p_*^{(1)}=p_*^{(0)}-\tfrac{F(p_*^{(0)})}{F'(p_*^{(0)})},
\]
and recovers the common contact speed from
\[
u_* = \tfrac12\bigl(u_L+u_R+f_R(p_*^{(1)})-f_L(p_*^{(1)})\bigr).
\]
The left and right star densities, as well as the associated wave speeds, are then computed from the usual shock/rarefaction branches determined by the sign of $p_*^{(1)}-p_L$ and $p_*^{(1)}-p_R$.

This closure is available precisely because the DRV stage provides a canonical family-separated $1$--$2$--$3$ decomposition. A jump-like reconstruction such as WENO-Z--THINC--BVD can sharpen the state profile, but it does not naturally output separate detectors for the left acoustic wave, the contact, and the right acoustic wave, nor does it provide comparably distinguished plateau windows attached to those families. Without that DRV-guided geometric decomposition, there is no equally canonical final-time pressure equation to solve.

The one-step closure is also not the only option. For the present benchmarks, one Newton correction is already enough to recover the sharp local geometry reported in Section~\ref{sec:results}. For more general flows, however, the same pressure equation can simply be iterated further. Once the shock/rarefaction branches on the two sides are allowed to switch according to the sign of $p_*-p_k$, the same local closure can accommodate more general wave patterns. In the limit of full convergence, one recovers the exact local ideal-gas Riemann solve discussed separately in Appendix~\ref{app:exact-closure}. This is why the exact local solve should be viewed as an optional sharpening module, not as the source of the DRV geometry itself.

The remainder of this section records the precise formulas and discretization choices used in the accompanying Python implementation, making Algorithm~\ref{alg:recon} directly reproducible. The baseline conservative solver (WENO--5, HLLC, SSP--RK3) is standard and is recorded separately in Appendix~\ref{app:implementation}. The code operates on a uniform mesh of $N$ cells on a domain $[a,b]$, with
\[
\Delta x = \tfrac{b-a}{N},
\qquad
x_j=a+\Bigl(j+\tfrac12\Bigr)\Delta x,
\qquad j=0,\dots,N-1.
\]
In all benchmark families reported in the paper, the domain is $[a,b]=[-\tfrac12,\tfrac12]$, except for the Collision problem which uses $[0,1]$.

\subsection{Discrete algebraic DRV surrogates}

At the final time, the code computes centered differences
\[
D_0 a_j = \tfrac{a_{j+1}-a_{j-1}}{2\Delta x},
\]
using edge padding. The algebraic DRV surrogates are then
\begin{subequations}
\label{eq:alg-drvs-discrete}
\begin{align}
\mrs_j &= D_0 s_j,\\
\mrw_j &= D_0 u_j + \tfrac{\alpha\sigma_j}{\gamma p_j}D_0 p_j,\\
\mrz_j &= D_0 u_j - \tfrac{\alpha\sigma_j}{\gamma p_j}D_0 p_j,
\end{align}
\end{subequations}
where
\[
\sigma_j=\tfrac{1}{\alpha}\sqrt{\tfrac{\gamma p_j}{\rho_j}}.
\]
By Lemma~\ref{lem:alg-identity}, \eqref{eq:alg-drvs-discrete} is exactly the smooth DRV system with derivatives replaced by centered differences.

\subsection{Wave sensor and adaptive Gaussian filtering}

The code smooths the raw algebraic surrogates before spike detection. The smoothing is adaptive. Define
\[
\log\rho_j=\log(\max(\rho_j,\rho_{\min})),
\qquad
\log p_j=\log(\max(p_j,p_{\min})),
\]
and the entropy gradient surrogate
\[
s_{x,j}=\tfrac{D_0p_j}{\max(p_j,p_{\min})}-\gamma\tfrac{D_0\rho_j}{\max(\rho_j,\rho_{\min})}.
\]
The four raw sensor components are
\begin{align*}
s^{(1)}_j &= \operatorname{RN}\bigl(\Delta x\,D_0(\log\rho)_j\bigr),\\
s^{(2)}_j &= \operatorname{RN}\bigl(\Delta x\,D_0(\log p)_j\bigr),\\
s^{(3)}_j &= \operatorname{RN}\!\left(\tfrac{\Delta x\,D_0u_j}{c_j+|u_j|+1}\right),\\
s^{(4)}_j &= \operatorname{RN}\bigl(\Delta x\,s_{x,j}\bigr),
\end{align*}
where $\operatorname{RN}$ denotes the robust normalization used in the code:
\[
\operatorname{RN}(v)_j=
\min\!\left(\tfrac{|v_j|}{Q_{0.90}(|v|)+10^{-14}},1\right),
\]
with fallback to the maximum norm if the $0.90$ quantile degenerates. The sensor is
\[
\mathrm{sensor}_j = G_{1.0}\!\left(\max\{s^{(1)}_j,s^{(2)}_j,s^{(3)}_j,s^{(4)}_j\}\right),
\]
clipped to $[0,1]$, where $G_{1.0}$ is a Gaussian filter of width one cell, truncated at four standard deviations and applied with reflection at the boundary.

For any field $a_j$, the adaptive filtered field is a convex combination of three Gaussian smoothings:
\[
a^{\mathrm{small}}=G_{\sigma_{\min}}a,
\qquad
a^{\mathrm{mid}}=G_{\sigma_{\mathrm{mid}}}a,
\qquad
a^{\mathrm{large}}=G_{\sigma_{\max}}a,
\]
with
\[
\sigma_{\min}=1.25,
\qquad
\sigma_{\mathrm{mid}}=3.0,
\qquad
\sigma_{\max}=5.25
\quad\text{cells}.
\]
The local blending weights are
\[
w_{\mathrm{small}}=s^2,
\qquad
w_{\mathrm{mid}}=2s(1-s),
\qquad
w_{\mathrm{large}}=(1-s)^2,
\]
with $s=\mathrm{sensor}_j$, and the filtered field is
\[
\widetilde a_j=
\tfrac{w_{\mathrm{small}}a^{\mathrm{small}}_j+w_{\mathrm{mid}}a^{\mathrm{mid}}_j+w_{\mathrm{large}}a^{\mathrm{large}}_j}{w_{\mathrm{small}}+w_{\mathrm{mid}}+w_{\mathrm{large}}+10^{-14}}.
\]
When adaptive filtering is disabled, the code instead uses a single Gaussian width $\sigma_{\mathrm{fixed}}=4.0$ cells. In all cases the filter uses reflection at the boundaries and truncation at four standard deviations.

\subsection{Background-subtracted spike localization}

For a filtered spike $g_j$ on an index interval $I=[j_L,j_R]$, define the affine background $b_j$ by linear interpolation between $g_{j_L}$ and $g_{j_R}$, and define the discrete center of mass
\begin{equation}
\widehat X(g;I)
=
\tfrac{\sum_{j=j_L}^{j_R} x_j(g_j-b_j)\,\Delta x}{\sum_{j=j_L}^{j_R}(g_j-b_j)\,\Delta x}.
\label{eq:cm-discrete}
\end{equation}
This is the location formula used everywhere in the code.

\begin{lemma}[Filtered-spike localization]
\label{lem:filtered-spike}
Assume that on a search window $I$ one has a decomposition
\[
g_j = a\,\psi_\eta(x_j-X_*) + r_j,
\]
where $\psi_\eta(x)=\eta^{-1}\psi(x/\eta)$, $\psi$ is smooth and exponentially localized, $\int_{\R}\psi(x)\,dx=m_0\neq 0$, $\int_{\R}x\psi(x)\,dx=0$, and $r_j$ is a perturbation. Assume also that the affine background subtraction removes the far-field linear trend. Then
\[
\widehat X(g;I)-X_*
=
O\!\left(
\Delta x^r
+
\tfrac{\|r\|_{\ell^1(I)}}{|a m_0|}
+
e^{-c\,\mathrm{dist}(X_*,\partial I)/\eta}
\right),
\]
where $r$ is the order of the quadrature implicitly used in \eqref{eq:cm-discrete}.
\end{lemma}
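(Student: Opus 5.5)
The plan is to write the estimator as a ratio $\widehat X-X_*=N/D$ and estimate numerator and denominator separately. After background subtraction set $h_j=g_j-b_j$. Then
\[
\widehat X(g;I)-X_*=\frac{\sum_{j\in I}(x_j-X_*)h_j\,\Delta x}{\sum_{j\in I}h_j\,\Delta x}=:\frac{N}{D}.
\]
Substituting $h_j=a\psi_\eta(x_j-X_*)+(r_j-b_j)$ splits each of $N$ and $D$ into a main part carried by $\psi_\eta$ and a perturbation part carried by $r_j-b_j$.

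\emph{Main parts.} The first step is to compare the Riemann sums with continuum integrals. The main sums $a\sum_j\psi_\eta(x_j-X_*)\Delta x$ and $a\sum_j(x_j-X_*)\psi_\eta(x_j-X_*)\Delta x$ are quadratures of the integrals $a\int_I\psi_\eta(x-X_*)\,dx$ and $a\int_I(x-X_*)\psi_\eta(x-X_*)\,dx$. By hypothesis the quadrature has order $r$, so each sum equals its integral up to an $O(\Delta x^r)$ error. The implied constant depends on derivatives of $\psi_\eta$ and is treated as fixed at fixed $\eta$.

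Next, extend the integrals from $I$ to $\R$. Because $\psi$ is exponentially localized, the tails beyond $\partial I$ contribute $O(|a|e^{-c\,\mathrm{dist}(X_*,\partial I)/\eta})$. Rescaling $x-X_*=\eta y$ gives
\[
\int_\R\psi_\eta=m_0,\qquad \int_\R(x-X_*)\psi_\eta(x-X_*)\,dx=\eta\int_\R y\psi(y)\,dy=0.
\]
Hence $D=am_0+O(|a|\Delta x^r+|a|e^{-c\,\mathrm{dist}/\eta})+O(\|r\|_{\ell^1})$, and the main part of $N$ is only of the size of these error terms.

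\emph{Perturbation parts.} The $r$-contribution to $D$ is bounded by $\|r\|_{\ell^1(I)}$, where the $\ell^1$ norm is understood with weight $\Delta x$. In $N$ the same contribution picks up the factor $|x_j-X_*|\le|I|$, and the window width is treated as an $O(1)$ constant.

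The background needs care. Since $b$ is the linear interpolant of the endpoint values $g_{j_L}$ and $g_{j_R}$, it is controlled by $|\psi_\eta|$ at the endpoints, which is exponentially small, plus $|r_{j_L}|+|r_{j_R}|$. I will invoke the stated hypothesis that the affine subtraction removes the far-field linear trend, so that any linear component of $r$ is absorbed rather than counted twice. The remaining piece of $b$ is then bounded by the same $O(\|r\|_{\ell^1}+|a|e^{-c\,\mathrm{dist}/\eta})$.

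\emph{Conclusion.} Provided the perturbations are small relative to $|am_0|$, we have $|D|\ge|am_0|/2$. Dividing, $N/D=O\bigl(\Delta x^r+\|r\|_{\ell^1}/|am_0|+e^{-c\,\mathrm{dist}/\eta}\bigr)$, which is the claimed bound.

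\emph{Main obstacle.} The hardest step is making the treatment of $b_j$ and the non-uniform constants honest. The endpoint values of $r$ are pointwise quantities and are not controlled by $\|r\|_{\ell^1}$. I would absorb them through the trend hypothesis, or else state that the constant depends on $|I|$ and on $\eta/\Delta x$.
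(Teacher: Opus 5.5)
Your proposal is correct and follows essentially the paper's argument: you split the background-subtracted zeroth and first moments into a $\psi_\eta$ part (quadrature error plus exponentially small truncation, with the vanishing first moment of $\psi$ removing the leading term of the centered numerator) and a perturbation part, then divide, and centering at $X_*$ is just the paper's quotient $N_I/M_I$ with $X_*$ subtracted. Your explicit treatment of $b_j$, including the remark that its endpoint values are pointwise rather than $\ell^1$ quantities, is more careful than the paper, which simply absorbs the background into the trend hypothesis and the $O(\|r\|_{\ell^1(I)})$ term.
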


\begin{proof}
Writing
\[
M_I=\sum_{j\in I}(g_j-b_j)\Delta x,
\qquad
N_I=\sum_{j\in I}x_j(g_j-b_j)\Delta x,
\]
one has
\[
M_I = a\sum_{j\in I}\psi_\eta(x_j-X_*)\Delta x + O(\|r\|_{\ell^1(I)}),
\]
and the discrete sum approximates $\int_I\psi_\eta(x-X_*)\,dx$ with the stated quadrature and truncation errors. Likewise,
\[
N_I = a\sum_{j\in I}x_j\psi_\eta(x_j-X_*)\Delta x + O(\|r\|_{\ell^1(I)}).
\]
Because the first moment of $\psi$ vanishes, the leading term is $aX_*m_0$, again up to the same quadrature and truncation errors. Taking the quotient $N_I/M_I$ yields the claim.
\end{proof}

Lemma~\ref{lem:filtered-spike} justifies the center-of-mass localization used in Step~4 of Algorithm~\ref{alg:recon}. The code does not assume the search interval $I$ in advance. Instead it begins from a peak index $j_0$, sets a relative amplitude threshold equal to $0.05$ times the local peak magnitude, and expands left and right while the field remains above threshold. To prevent the window from running into a larger neighboring feature, the expansion is stopped if the neighboring point starts growing again after the window already exceeds three cells on that side. If the expansion collapses to a single cell, the code forces a minimum half-width of three cells. This gives the interval $I$, after which \eqref{eq:cm-discrete} is applied.

\subsection{Initial wave detection}

Let $\mrw^s$, $\mrz^s$, and $\mrs^s$ denote the adaptively filtered DRV surrogates. The code extracts the initial geometry in the following order.
\begin{enumerate}[label=\textnormal{(\alph*)},leftmargin=*]
    \item \textsf{Shock.} The shock is the dominant negative spike of $\mrw^s$: the code sets $j_{\mathrm{shock}}=\arg\min \mrw^s_j$ and then applies the spike-localization procedure above.
    \item \textsf{Contact.} The code searches for the dominant spike of $|\mrs^s|$ to the \emph{left} of the shock. Concretely, if $i_L^{\mathrm{shock}}$ is the left endpoint of the detected shock window, then all indices $\ge i_L^{\mathrm{shock}}-5$ are suppressed in the contact search. If the remaining maximum of $|\mrs^s|$ is less than $10^{-4}$, the contact is declared absent; otherwise the signed field $\mrs^s$ is passed to the center-of-mass routine.
    \item \textsf{Rarefaction head and tail.} The code differentiates the filtered 1-family field,
    \[
    D_0\mrz^s = \tfrac{\mrz^s_{j+1}-\mrz^s_{j-1}}{2\Delta x},
    \]
    and searches to the left of the contact (if the contact has been found). If the positive and negative spikes of $D_0\mrz^s$ are both present with magnitudes exceeding $10^{-5}$, their center-of-mass locations define the rarefaction head and tail.
\end{enumerate}

When the two $D_0\mrz^s$ spikes are not cleanly separated, the code falls back to a support detector based on the positive velocity gradient. It filters $\max(D_0u,0)$ by the same adaptive Gaussian operator, computes its maximum amplitude on the region left of the contact, and takes the support where that filtered positive gradient exceeds $0.05$ of its local maximum. The support endpoints supply fallback rarefaction head and tail candidates. The code chooses between the DRV-based and support-based rarefaction geometry by the explicit heuristics implemented in Python: it switches to the support detector if the DRV fan width is less than $8\Delta x$ while the support width exceeds $20\Delta x$, or if the DRV head lies more than $20\Delta x$ to the right of the support head.

Finally, the geometry is clamped to the physical domain edges and forced to respect the order
\[
X_{rh}\le X_{rt}\le X_c\le X_s,
\]
with at least one cell-width separation when two neighboring waves are both present.

\subsection{Plateau sampling and pressure-wave-function closure}
\label{sec:plateau-closure}

The initial geometry supplies four plateau intervals: the far-left state, the left-star plateau, the right-star plateau, and the far-right state. The code samples each plateau by medians of $\rho$, $u$, and $p$ over a trimmed interior interval.
\begin{itemize}[leftmargin=*]
    \item[$\ast$] If an interval contains more than $2\times 3+1$ cells, the first three and last three cells are discarded before taking medians.
    \item[$\ast$] For the far-left and far-right default states, the medians are taken over the first or last ten cells, respectively (capped between three cells and one-eighth of the domain length).
    \item[$\ast$] The interval endpoints are shifted by $\pm 2\Delta x$ away from the tracked waves before medians are taken, exactly as in the code.
\end{itemize}
If one of the intervals degenerates, the corresponding plateau falls back to the nearest available default state.

Given sampled left and right states $(\rho_L,u_L,p_L)$ and $(\rho_R,u_R,p_R)$ together with sampled star-side pressures $p_{*,L}^{\mathrm{samp}}$ and $p_{*,R}^{\mathrm{samp}}$, the main implementation forms the classical pressure-wave-function equation
\[
F(p)=f_L(p)+f_R(p)+u_R-u_L,
\]
where for $k\in\{L,R\}$, 
\[
f_k(p)=
\begin{cases}
\tfrac{2c_k}{\gamma-1}\left[\left(\tfrac{p}{p_k}\right)^{\tfrac{\gamma-1}{2\gamma}}-1\right], & p\le p_k \quad\text{(rarefaction)},\\[2ex]
(p-p_k)\sqrt{\tfrac{2}{(\gamma+1)\rho_k\left(p+\tfrac{\gamma-1}{\gamma+1}p_k\right)}}, & p>p_k \quad\text{(shock)}.
\end{cases}
\]
Their derivatives are
\[
f_k'(p)=
\begin{cases}
\tfrac{c_k}{\gamma p_k}\left(\tfrac{p}{p_k}\right)^{-\tfrac{\gamma+1}{2\gamma}}, & p\le p_k,\\[2ex]
\sqrt{\tfrac{2}{(\gamma+1)\rho_k\left(p+\tfrac{\gamma-1}{\gamma+1}p_k\right)}}\left(1-\tfrac{p-p_k}{2\left(p+\tfrac{\gamma-1}{\gamma+1}p_k\right)}\right), & p>p_k.
\end{cases}
\]
The main-body closure uses the sampled-pressure seed
\[
p_*^{(0)}=\tfrac12\bigl(p_{*,L}^{\mathrm{samp}}+p_{*,R}^{\mathrm{samp}}\bigr)
\]
and takes a single Newton correction
\[
p_*^{(1)}=p_*^{(0)}-\tfrac{F(p_*^{(0)})}{F'(p_*^{(0)})}.
\]
The code then clips $p_*^{(1)}$ to the pattern-consistent interval
\[
p_R(1+10^{-6})\le p_*\le p_L(1-10^{-6}),
\]
whenever these bounds are compatible, and otherwise only enforces positivity. The common contact speed is recovered from
\[
u_* = \tfrac12\bigl(u_L+u_R+f_R(p_*)-f_L(p_*)\bigr).
\]

The star densities are then computed from the usual ideal-gas branches
\[
\rho_{*k}=
\begin{cases}
\rho_k\left(\tfrac{p_*}{p_k}\right)^{1/\gamma}, & p_*\le p_k,\\[2ex]
\rho_k\,\tfrac{\tfrac{p_*}{p_k}+\tfrac{\gamma-1}{\gamma+1}}{\tfrac{\gamma-1}{\gamma+1}\tfrac{p_*}{p_k}+1}, & p_*>p_k,
\end{cases}
\]
and the corresponding wave speeds are obtained from the same branch decision. Thus
\[
X_{rh}=x_*+s_{L,h}t,\qquad
X_{rt}=x_*+s_{L,t}t,\qquad
X_c=x_*+u_*t,\qquad
X_s=x_*+s_{R,t}t,
\]
where for a rarefaction one uses the head and tail characteristic speeds and for a shock one uses the standard shock speed. In the present benchmarks this reproduces the expected $1$-rarefaction/$2$-contact/$3$-shock structure. The implementation repeats the refinement pass twice. Appendix~\ref{app:exact-closure} records the fully converged exact local closure as an optional variant.

\subsection{Primitive reconstruction}

Once the final geometry and plateau states have been determined, the code reconstructs $u_\sharp$, $p_\sharp$, and $s_\sharp$ piecewise. Outside the rarefaction fan, the reconstruction is piecewise constant on the left plateau, the left-star plateau, the right-star plateau, and the right plateau. On the rarefaction interval $[X_{rh},X_{rt}]$, the code uses the self-similar simple-wave structure.

Let $\theta=(x-X_{rh})/(X_{rt}-X_{rh})$. Inside the fan,
\[
u_\sharp(x)=u_L+\theta(u_{*L}-u_L),
\qquad
c_\sharp(x)=c_L+\theta(c_{*L}-c_L),
\qquad
s_\sharp(x)=s_L,
\]
and therefore
\[
p_\sharp(x)=p_L\left(\tfrac{c_\sharp(x)}{c_L}\right)^{\tfrac{2\gamma}{\gamma-1}},
\qquad
\rho_\sharp(x)=\rho_L\left(\tfrac{c_\sharp(x)}{c_L}\right)^{\tfrac{2}{\gamma-1}}.
\]
In the actual code, $u_\sharp$, $p_\sharp$, and $s_\sharp$ are first reconstructed, and then
\[
s_\sharp \leftarrow \min\{80,\max(-80,s_\sharp)\},
\qquad
p_\sharp \leftarrow \max(p_\sharp,p_{\min}),
\]
before the thermodynamic closure
\begin{equation}
\rho_\sharp = \left(\tfrac{p_\sharp}{e^{s_\sharp}}\right)^{1/\gamma},
\qquad
e_\sharp = \tfrac{p_\sharp}{(\gamma-1)\rho_\sharp}.
\label{eq:eos-sharp}
\end{equation}

\begin{table}[t]
\centering
\small
\begin{tabular}{ll}
\toprule
quantity & value used in the code \\
\midrule
$\rho_{\min}$, $p_{\min}$, denominator floor & $10^{-14}$ \\
WENO regularization $\varepsilon_{\mathrm{WENO}}$ & $10^{-6}$ \\
robust-normalization quantile & $0.90$ \\
Gaussian truncate parameter & $4.0$ \\
Gaussian boundary mode & reflect \\
adaptive widths & $\sigma_{\min}=1.25$, $\sigma_{\mathrm{mid}}=3.0$, $\sigma_{\max}=5.25$ cells \\
fixed width (when adaptive mode is off) & $\sigma_{\mathrm{fixed}}=4.0$ cells \\
spike relative threshold & $0.05$ of local peak \\
minimum spike half-width & $3$ cells \\
contact-detection amplitude threshold & $10^{-4}$ \\
$D_0\mrz$ amplitude threshold & $10^{-5}$ \\
plateau trim & $3$ cells \\
edge plateau median width & $10$ cells \\
Newton corrections per refinement pass & $1$ \\
pressure clip factor in Step~6 & $1\pm 10^{-6}$ \\
number of refinement passes & $2$ \\
\bottomrule
\end{tabular}
\caption{Implementation constants needed to reproduce the accompanying Python code.}
\label{tab:impl-constants}
\end{table}

\subsection{Relation to existing reconstruction schemes}
\label{sec:related}

The broad goal of sharpening discontinuities beyond a standard Godunov or WENO baseline is not new. There is a large existing literature, which can be organized into three families: contact-steepening and artificial-compression methods built into the evolution scheme \cite{ColellaWoodward1984,Huynh1995,Yang1990,JiangShu1996}; subcell-resolution methods that estimate within-cell discontinuity positions from cell averages \cite{HartenSubcell1989,ShuOsher1989}; and interface-reconstruction methods, including non-diffusive contact-capturing schemes \cite{DespresLagoutiere2001,Aguillon2014} and THINC/BVD-type jump-like reconstructions that represent a discontinuity partly inside a cell \cite{SunInabaXiao2016}.

The present method belongs to this general neighborhood, but the specific mechanism is different: the localization stage is based on differentiated characteristic variables that approximate the wave-family Radon measures, rather than on steepening the state profile or choosing between polynomial and jump-like state reconstructions. The three detectors are family-separated ($\mrz$ for the $1$-wave, $\mrs$ for the contact, $\mrw$ for the $3$-wave), and the primitive variables are reconstructed only after that geometric information has been extracted and refined by a local pressure-wave-function closure.

The LeBlanc internal-energy defect provides an especially stringent test. Many standard and higher-order schemes still exhibit a visible overshoot there \cite{WangDuZhaoYuan2020}. The nonlinear artificial viscosity scheme, the $C$-method,  that we introduced in  \cite{ReisnerSerencsaShkoller2013,RamaniReisnerShkoller2019} is capable of  suppressing that overshoot but it is not a light final-time postprocessor. The direct comparison in Section~\ref{sec:comparison-results} tests the present mechanism against two strong non-DRV comparators---a MUSCL--THINC--BVD hybrid \cite{vanLeer1979,Toro2009,SunInabaXiao2016} and a WENO-Z--THINC--BVD hybrid \cite{Borges2008,SunInabaXiao2016}---and shows that neither reproduces the combination of near-discontinuous contacts, small contact-window internal-energy error, and suppression of the LeBlanc positive overshoot achieved by the DRV reconstruction.

\section{Reconstruction eliminates the contact-layer internal-energy overshoot}
\label{sec:contact-prop}

A well-known defect of conservative shock-capturing schemes on strong Riemann problems is that the numerically smeared contact layer can produce a spurious overshoot in the specific internal energy $e=p/((\gamma-1)\rho)$. The overshoot arises because the baseline scheme transitions between the left-star and right-star states through a numerically diffused layer in which $u$, $p$, $\rho$, and $s$ all vary simultaneously. Since $e$ depends nonlinearly on $\rho$ and $p$, the transition can create intermediate values of $e$ that exceed the correct star-state values on both sides. This defect is especially visible in the severe-expansion and LeBlanc tests.

The DRV reconstruction avoids this mechanism by design. Rather than smoothly transitioning through the smeared contact, it reconstructs $u_\sharp$ and $p_\sharp$ as flat on both sides of the tracked contact position $X_c$, with the entropy $s_\sharp$ carrying a single jump. The following proposition makes this precise.

\begin{proposition}[No additional contact-layer spike]
\label{prop:contact-spike}
Assume that, in the closure stage of Algorithm~\ref{alg:recon}, the reconstructed fields near the tracked contact $X_c$ satisfy the following:
\begin{enumerate}[label=\textnormal{(\roman*)},leftmargin=*]
    \item $u_\sharp$ and $p_\sharp$ are constant on both sides of $X_c$ with the same left and right star values $u_*$ and $p_*$;
    \item $s_\sharp$ is piecewise constant with a single jump at $X_c$;
    \item $\rho_\sharp$ and $e_\sharp$ are recovered from \eqref{eq:eos-sharp}.
\end{enumerate}
Then $e_\sharp$ contains no additional contact-centered overshoot. Only the physical jump induced by the entropy jump remains.
\end{proposition}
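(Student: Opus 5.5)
The plan is to compute $e_\sharp$ explicitly from the equation of state \eqref{eq:eos-sharp} and show that, near $X_c$, it takes exactly two values, one on each side, with nothing in between. First, I would fix a neighborhood $(X_c-\delta,X_c+\delta)$ of the tracked contact that excludes the other tracked waves, so that hypotheses (i)--(ii) hold on it. By (i), $p_\sharp\equiv p_*$ on the whole neighborhood. By (ii), $s_\sharp=s_{*L}$ for $x<X_c$ and $s_\sharp=s_{*R}$ for $x>X_c$.

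The clipping $s_\sharp\mapsto\min\{80,\max(-80,s_\sharp)\}$ and the floor $p_\sharp\mapsto\max(p_\sharp,p_{\min})$ are pointwise maps, so they preserve the properties of being constant and piecewise constant. Applying them, $p_\sharp$ is still the single constant $\tilde p_*$ and $s_\sharp$ still takes two values $\tilde s_{*L},\tilde s_{*R}$.

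Next I substitute into \eqref{eq:eos-sharp}. This gives $\rho_\sharp=(\tilde p_* e^{-\tilde s_{*k}})^{1/\gamma}$ on side $k$, and hence
\[
e_\sharp=\tfrac{\tilde p_*}{(\gamma-1)\rho_\sharp}=\tfrac{\tilde p_*^{\,1-1/\gamma}}{\gamma-1}\,e^{\tilde s_{*k}/\gamma},\qquad k\in\{L,R\}.
\]
So $e_\sharp$ is piecewise constant with a single jump at $X_c$. Its two values are the star internal energies $e_{*L},e_{*R}$ determined by $(p_*,s_{*L})$ and $(p_*,s_{*R})$. The jump is $[e_\sharp]=\tfrac{\tilde p_*^{1-1/\gamma}}{\gamma-1}\bigl(e^{\tilde s_{*R}/\gamma}-e^{\tilde s_{*L}/\gamma}\bigr)$, which vanishes exactly when the entropy jump vanishes.

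It follows that $\min(e_{*L},e_{*R})\le e_\sharp\le\max(e_{*L},e_{*R})$ everywhere near $X_c$. There is no point at which $e_\sharp$ exceeds both one-sided values, and that is precisely the meaning of "no additional contact-centered overshoot". Equivalently, $\partial_x e_\sharp=[e_\sharp]\delta_{X_c}$ has no singular part beyond the physical jump.

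For contrast, I would remark on why the baseline scheme behaves differently. There, $p$ varies within the layer, and $e$ depends on the jointly varying pair $(p,\rho)$, so it can leave the interval $[e_{*L},e_{*R}]$. In the reconstruction, $e_\sharp$ depends only on the monotone one-parameter map $s\mapsto e$ evaluated at two values.

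The main obstacle is not analytic but definitional: making precise what "overshoot" means, namely leaving the closed interval spanned by the two one-sided limits. A second point is checking that the safeguards (clipping and flooring) do not break the piecewise-constant structure. I would handle both explicitly as above; the rest is routine substitution.
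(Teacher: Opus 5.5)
Your proposal is correct and follows the same route as the paper. Flat $p_\sharp$ plus a two-valued $s_\sharp$ forces $\rho_\sharp$, and hence $e_\sharp$, to be two-valued with a single jump at $X_c$; your explicit formula $e_\sharp=\tilde p_*^{\,1-1/\gamma}e^{\tilde s_{*k}/\gamma}/(\gamma-1)$, your check that the pointwise clipping and flooring preserve piecewise constancy, and your interval definition of ``overshoot'' make precise what the paper's shorter argument leaves implicit.
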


\begin{proof}
By assumption, $u_\sharp$ and $p_\sharp$ are flat across the contact, so there is no transition layer in those fields from which an artificial thermodynamic spike could be generated. Since $s_\sharp$ is piecewise constant with a single jump, the density recovered from
\[
\rho_\sharp=\left(\tfrac{p_*}{e^{s_\sharp}}\right)^{1/\gamma}
\]
is also piecewise constant with only the physical contact jump. Consequently
\[
e_\sharp=\tfrac{p_*}{(\gamma-1)\rho_\sharp}
\]
is piecewise constant as well, again with only the physical contact jump. There is therefore no extra contact-centered overshoot in $e_\sharp$.
\end{proof}

The proposition is intentionally modest: it does not say that the baseline solution is free of wall heating (it is not), nor does it require the closure to be exact. It says only that once the reconstruction replaces the smeared contact layer by a contact-flat pair $(u_\sharp,p_\sharp)$ together with a single entropy jump, the reconstructed internal energy \emph{cannot} develop a new contact-centered spike. This condition is satisfied whenever the closure in Step~6 returns a common star pair $(u_*,p_*)$---whether that closure is the one-step Newton correction of Section \ref{sec:strategy} or the fully converged exact Riemann solve of Appendix~\ref{app:exact-closure}. The severe-expansion and LeBlanc results in Section~\ref{sec:results} confirm that this structural guarantee is realized in practice: the reconstructed internal-energy profiles show the physical entropy-induced jump but no spurious overshoot.

\begin{remark}[Conceptual level of the direct comparisons]
\label{rem:comparison-levels}
The direct comparison in Section~\ref{sec:comparison-results} juxtaposes methods operating at different conceptual levels: MUSCL--THINC--BVD and WENO-Z--THINC--BVD are conservative evolution-time reconstructors, whereas the present DRV procedure is a final-time geometric reconstruction. The comparison should be read as evidence that DRV geometry plus a very cheap local closure already outperforms the two independently coded jump-like competitors on the present tests.
\end{remark}

\section{Numerical results and wave-location errors}
\label{sec:results}

All results in this section were generated with the accompanying Python implementation, whose detailed formulas are recorded in Section~\ref{sec:strategy}, using HLLC for the baseline solve, adaptive filtering, and two sampled-pressure refinement passes with one Newton correction per pass. All reported runs use the common spatial domain $[a,b]=[-\tfrac12,\tfrac12]$. The benchmark resolutions are $N=600$ for Sod and $N=1000$ for the severe-expansion and LeBlanc tests; the direct comparison study in Section~\ref{sec:comparison-results} is run at the common resolution $N=600$. Appendix~\ref{app:exact-closure} records the optional exact local Riemann closure and compares it against the one-step closure used here. The three benchmark parameter sets are:
\begin{center}
\begin{tabular}{lcccccc}
\toprule
benchmark & $N$ & $t_{\mathrm{final}}$ & $\gamma$ & $(\rho_L,u_L,p_L)$ & $(\rho_R,u_R,p_R)$ & CFL \\
\midrule
Sod & $600$ & $0.15$ & $1.4$ & $(1,0,1)$ & $(0.125,0,0.1)$ & $0.40$ \\
Severe expansion & $1000$ & $0.12$ & $1.4$ & $(1,0,1)$ & $(10^{-4},0,10^{-4})$ & $0.35$ \\
LeBlanc & $1000$ & $0.50$ & $5/3$ & $(1,0,\tfrac23\cdot 10^{-1})$ & $(10^{-3},0,\tfrac23\cdot 10^{-10})$ & $0.15$ \\
\bottomrule
\end{tabular}
\end{center}
The interface locations are $x_*=0$ for Sod, $x_*=-0.2$ for the severe expansion test, and $x_*=-\tfrac13$ for LeBlanc.

For each benchmark, the exact reference locations are taken from the standard exact self-similar ideal-gas Riemann solution. We compare the initial DRV-based geometry to the refined geometry through the absolute errors
\[
E_{\bullet}^{\mathrm{init}} = |X_{\bullet}^{\mathrm{init}}-X_{\bullet}^{\mathrm{ex}}|,
\qquad
E_{\bullet}^{\mathrm{ref}} = |X_{\bullet}^{\mathrm{ref}}-X_{\bullet}^{\mathrm{ex}}|,
\]
for $\bullet\in\{rh,rt,c,s\}$.

\subsection{Exact wave locations and errors}

Table~\ref{tab:wave-errors} summarizes the wave-location errors. The main message is that the two stages of the algorithm play different roles. The DRV stage already orders the waves correctly and places them at sub-cell accuracy. The one-step sampled-pressure closure is what collapses the remaining geometric error once the plateau states are sampled correctly. In Sod, the refined geometry agrees with the exact Riemann solution to within roundoff ($O(10^{-14})$). In the severe-expansion test, the fan-tail, contact, and shock errors drop from $2.54\times 10^{-2}$, $1.92\times 10^{-2}$, and $3.24\times 10^{-2}$ to $1.90\times 10^{-4}$, $2.07\times 10^{-4}$, and $5.59\times 10^{-4}$, respectively. In LeBlanc, the fan tail, contact, and shock errors all drop to the $10^{-4}$ level, while the rarefaction head remains the hardest feature because it lies on the left boundary and is therefore controlled mainly by boundary-side sampling rather than by an interior spike.

\begin{table}[t]
\centering
\small
\begin{tabular}{llccc}
\toprule
benchmark & wave & exact position & $E^{\mathrm{init}}$ & $E^{\mathrm{ref}}$ \\
\midrule
Sod & fan head & $-0.177482$ & $2.315\times 10^{-3}$ & $0$ \\
Sod & fan tail & $-0.010541$ & $2.983\times 10^{-3}$ & $2.597\times 10^{-15}$ \\
Sod & contact & $0.139118$ & $1.245\times 10^{-4}$ & $1.132\times 10^{-14}$ \\
Sod & shock & $0.262823$ & $1.063\times 10^{-3}$ & $8.110\times 10^{-14}$ \\
\midrule
Severe expansion & fan head & $-0.341986$ & $1.051\times 10^{-2}$ & $0$ \\
Severe expansion & fan tail & $0.167113$ & $2.539\times 10^{-2}$ & $1.897\times 10^{-4}$ \\
Severe expansion & contact & $0.224249$ & $1.924\times 10^{-2}$ & $2.066\times 10^{-4}$ \\
Severe expansion & shock & $0.346020$ & $3.239\times 10^{-2}$ & $5.592\times 10^{-4}$ \\
\midrule
LeBlanc & fan head & $-0.500000$ & $5.000\times 10^{-4}$ & $2.930\times 10^{-3}$ \\
LeBlanc & fan tail & $-0.085441$ & $1.194\times 10^{-2}$ & $1.689\times 10^{-4}$ \\
LeBlanc & contact & $-0.022414$ & $1.061\times 10^{-1}$ & $1.842\times 10^{-4}$ \\
LeBlanc & shock & $0.081226$ & $1.003\times 10^{-2}$ & $2.457\times 10^{-4}$ \\
\bottomrule
\end{tabular}
\caption{Exact wave locations and absolute errors for the implemented algorithm. The Sod test uses $N=600$ cells, while the severe-expansion and LeBlanc tests use $N=1000$ cells. The initial geometry comes from filtered DRV spikes; the refined geometry is obtained after two sampled-pressure refinement passes with one Newton correction per pass.}
\label{tab:wave-errors}
\end{table}

\subsection{Conservation, grid dependence, and parameter sensitivity}

Because the reconstructed fields are not constrained to reproduce the baseline cell averages exactly, the DRV reconstruction is not a discretely conservative replacement for the evolved finite-volume state; the baseline conservative state is always retained. Table~\ref{tab:conservation-defects} reports the relative mass, momentum, and energy defects of the reconstructed profiles with respect to that baseline. The defects are small: below $2\times 10^{-4}$ for Sod, at the $10^{-4}$--$10^{-3}$ level for the severe-expansion problem, and below $5\times 10^{-3}$ for LeBlanc. To put these numbers in context, the baseline solution's own density $L^1$ error is $8.6\times 10^{-4}$ for Sod, its contact-window internal-energy error is $2.9\times 10^{-1}$ for the severe expansion, and its LeBlanc contact-layer overshoot is $3.5\times 10^{-2}$. The reconstruction reduces all of these errors by one to ten orders of magnitude (see Tables~\ref{tab:wave-errors} and \ref{tab:comparison-bvd}), so the conservation defects are far smaller than the accuracy gains they enable.
\begin{table}[t]
\centering
\small
\begin{tabular}{lccc}
\toprule
benchmark & mass defect & momentum defect & energy defect \\
\midrule
Sod & $-9.51\times 10^{-5}$ & $5.51\times 10^{-5}$ & $1.94\times 10^{-4}$ \\
Severe expansion & $-2.72\times 10^{-4}$ & $-1.26\times 10^{-3}$ & $-4.40\times 10^{-4}$ \\
LeBlanc & $-3.65\times 10^{-3}$ & $-2.25\times 10^{-3}$ & $-4.83\times 10^{-3}$ \\
\bottomrule
\end{tabular}
\caption{Relative conservation defects of the final reconstructed profiles with respect to the baseline finite-volume state at the benchmark resolutions.}
\label{tab:conservation-defects}
\end{table}

To assess grid dependence, the DRV reconstruction was rerun at $N=200,400,800,1600$ on the same domain. Figure~\ref{fig:grid-convergence} shows two trends. First, the one-step closure remains very accurate across the tested grids. For Sod, the refined wave-location error is at roundoff on every tested grid. For LeBlanc, the visible residual is again the boundary-controlled rarefaction head, for which
\[
E_{\max}^{\mathrm{ref}}:=\max\{E_{rh}^{\mathrm{ref}},E_{rt}^{\mathrm{ref}},E_c^{\mathrm{ref}},E_s^{\mathrm{ref}}\}
\]
decreases from $1.11\times 10^{-2}$ at $N=200$ to $2.01\times 10^{-3}$ at $N=1600$. For the severe-expansion problem, the interior wave errors remain small---between $3.10\times 10^{-5}$ and $1.13\times 10^{-3}$ in this study---but are no longer driven to roundoff at intermediate resolutions, which is one of the main differences between the one-step closure and the exact closure recorded in Appendix~\ref{app:exact-closure}. Second, the density contact width scales essentially like one cell, with $W_\rho\approx \Delta x$ on all three tests. In the same grid study, the severe-expansion contact-window internal-energy error stays between $1.0\times 10^{-4}$ and $2.3\times 10^{-3}$, while the LeBlanc positive-overshoot metric $O_e^+$ remains nonpositive on all tested grids.

A light robustness check was also carried out on the severe-expansion and LeBlanc problems. Changing the spike relative threshold from $0.05$ to $0.10$ and scaling all three adaptive Gaussian widths by $\pm 15\%$ produced no visible change in the reported reconstruction metrics. In the implemented runs, the contact and shock errors as well as the contact-window internal-energy metrics remained identical to the displayed digits. This is not a full parameter study, but it suggests that the implemented detector is not finely tuned to a single narrow parameter set.

\begin{figure}[t]
\centering
\includegraphics[width=0.98\textwidth]{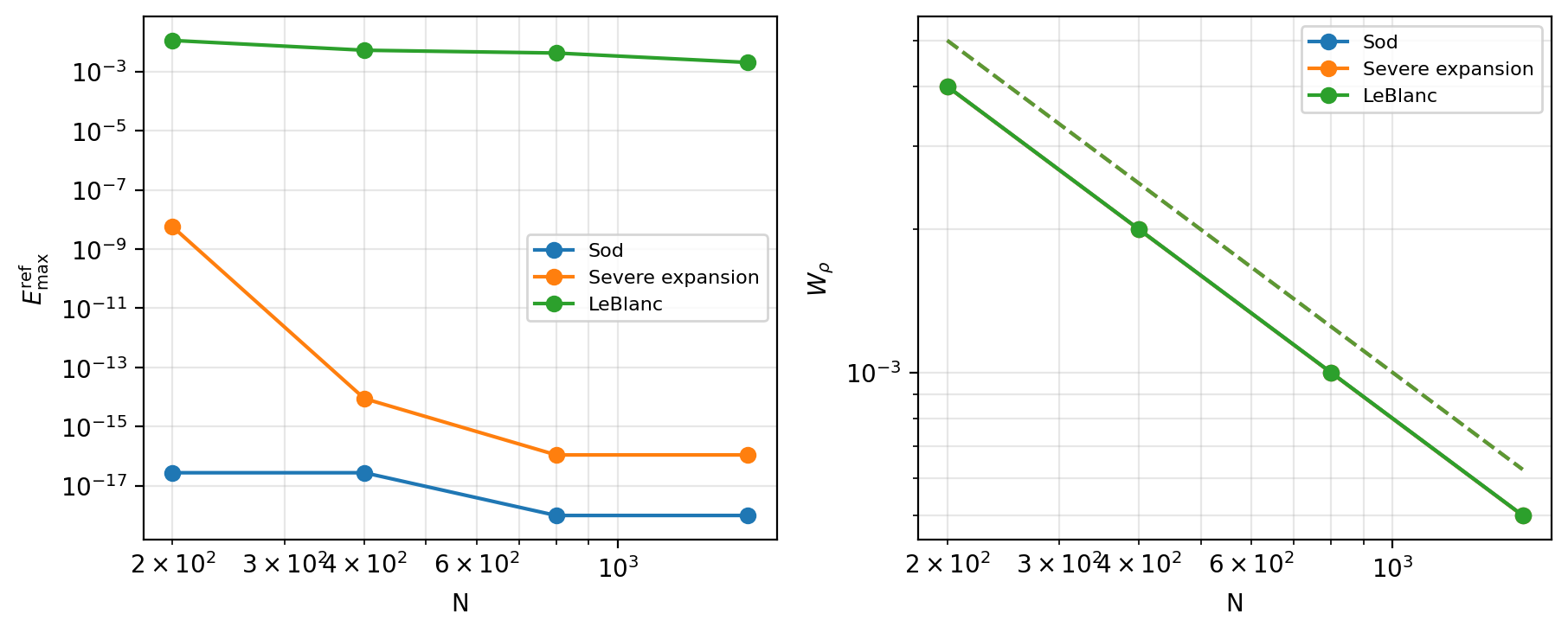}
\caption{Grid-dependence study for the DRV reconstruction on $N=200,400,800,1600$. Left: the maximum refined wave-location error $E_{\max}^{\mathrm{ref}}$. Sod stays at roundoff, the dominant visible residual is the boundary-controlled LeBlanc rarefaction head, and the severe-expansion errors remain small but nonzero for the one-step closure. Right: the $10$--$90$ density contact width $W_\rho$. The dashed reference line is $\Delta x$, showing that the reconstructed contact width remains essentially one cell across the tested resolutions.}
\label{fig:grid-convergence}
\end{figure}

\subsection{Sod shock tube}

For Sod, the baseline HLLC/WENO computation is already visually good. This makes the test useful for isolating the geometric effect of the DRV localization. The initial DRV-based positions are already accurate at roughly the $10^{-3}$ level, and after two sampled-pressure refinement passes the rarefaction head, rarefaction tail, contact, and shock all coincide with the exact Riemann locations to plotting accuracy; see Table~\ref{tab:wave-errors}. Figure~\ref{fig:sod-density} shows that the reconstructed density sharpens the contact and shock while keeping the correct rarefaction geometry.

\begin{figure}[t]
\centering
\includegraphics[width=0.88\textwidth]{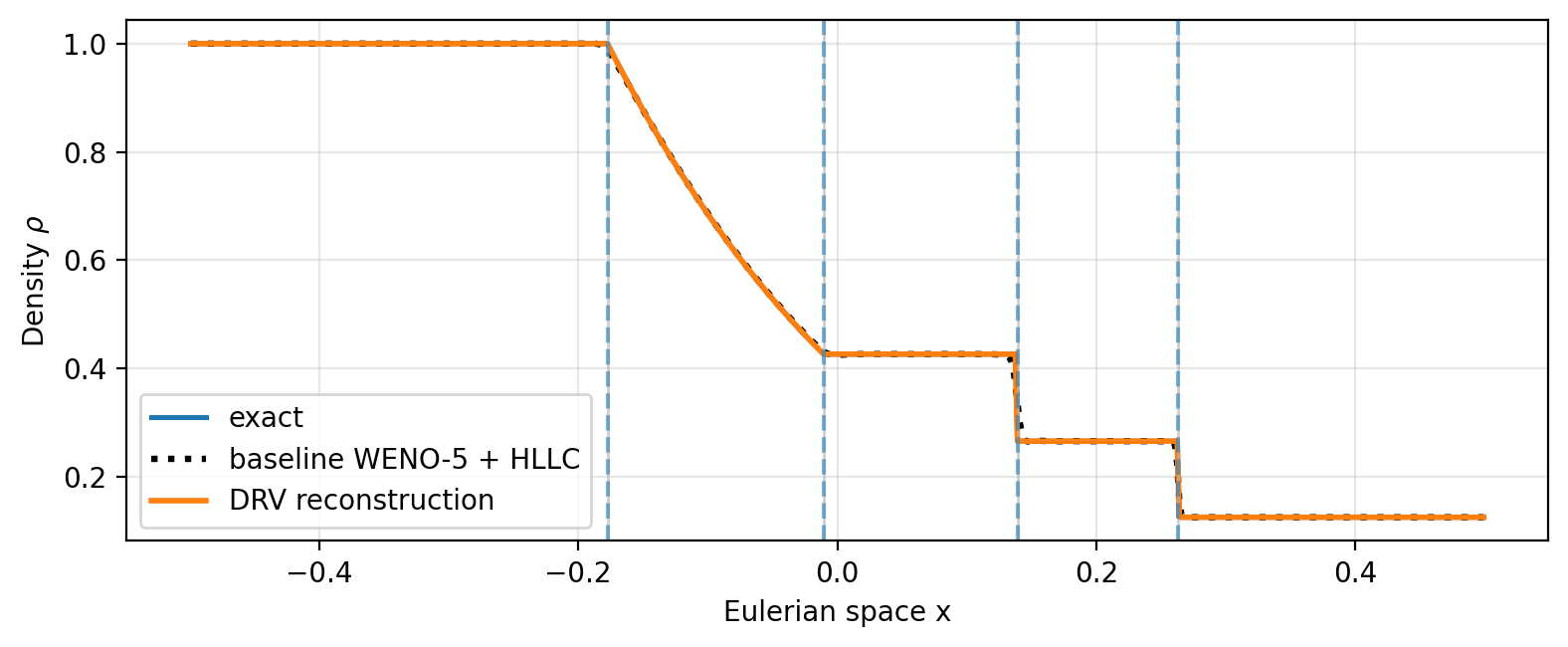}
\caption{Sod shock tube on $N=600$ cells at $t=0.15$. Dashed: exact density. Dotted black: baseline WENO--5+HLLC density. Solid: reconstructed density. The one-step closure aligns the refined wave locations with the exact self-similar Riemann solution to plotting accuracy.}
\label{fig:sod-density}
\end{figure}

\subsection{Severe expansion shock tube}

The severe expansion problem is much more revealing; it is a near-vacuum expansion test in the Tang--Liu/Rider spirit \cite{TangLiu2006,Rider2018}. Here the baseline solution exhibits a strong nonphysical contact-layer defect in the internal energy. The DRV stage still orders the waves correctly, but the raw geometric errors are no longer tiny: the initial fan-head, fan-tail, contact, and shock errors are approximately $1.05\times 10^{-2}$, $2.54\times 10^{-2}$, $1.92\times 10^{-2}$, and $3.24\times 10^{-2}$, respectively. After two one-step refinement passes, those errors drop to $0$, $1.90\times 10^{-4}$, $2.07\times 10^{-4}$, and $5.59\times 10^{-4}$. Figure~\ref{fig:severe-expansion-e} shows the internal energy. The baseline curve contains the well-known contact-layer distortion, whereas the reconstructed profile tracks the exact solution and suppresses the extra contact-centered defect. At the benchmark resolution, the contact-window internal-energy error is $6.00\times 10^{-4}$ and the positive-overshoot metric is nonpositive.

This is the benchmark for which Proposition~\ref{prop:contact-spike} is most visible computationally. The one-step closure already returns a common contact-side pair $(u_*,p_*)$, so the reconstruction uses equal star values of $u$ and $p$ across the contact while retaining the physical entropy jump. The resulting internal-energy graph contains the physical discontinuity but not the spurious contact-centered contamination present in the baseline solution.

\begin{figure}[t]
\centering
\includegraphics[width=0.88\textwidth]{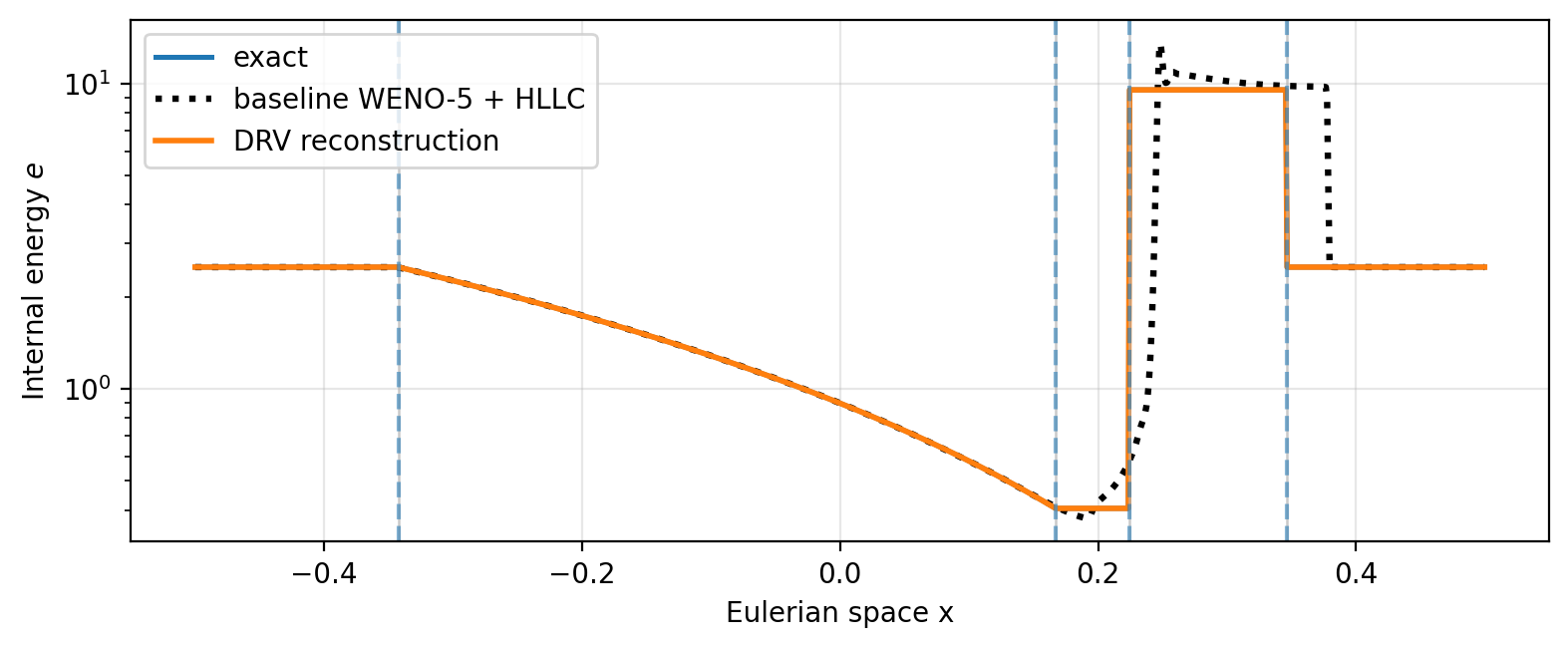}
\caption{Severe expansion shock tube on $N=1000$ cells at $t=0.12$ (logarithmic vertical scale). Dashed: exact internal energy. Dotted black: baseline WENO--5+HLLC internal energy. Solid: reconstructed internal energy. The one-step closure removes the contact-layer defect and aligns the wave locations to within a few $10^{-4}$ to $10^{-3}$.}
\label{fig:severe-expansion-e}
\end{figure}

\subsection{LeBlanc shock tube}

LeBlanc is the hardest test  presented here because of the extreme scale separation between the two sides of the initial data and because the rarefaction head is aligned with the left boundary. The initial contact location is badly displaced: $E_c^{\mathrm{init}}=1.061\times 10^{-1}$. After refinement, that error drops to $1.842\times 10^{-4}$, the fan-tail error drops to $1.689\times 10^{-4}$, and the shock error drops to $2.457\times 10^{-4}$. The rarefaction head is the only feature that does not improve here; its refined error is $2.930\times 10^{-3}$, which is consistent with the fact that the head is not being localized by an interior spike but by the boundary-side geometry.

Even in this more difficult regime, the method does what it is supposed to do. The DRV stage provides enough structure to place the local plateaus correctly, and the one-step sampled-pressure closure then sharply corrects the contact and shock locations. Figure~\ref{fig:leblanc-e} shows the internal energy. Relative to the baseline solution, the reconstructed profile has substantially better wave placement, a contact-window internal-energy error of $1.88\times 10^{-4}$, and no positive contact-layer overshoot in the metric $O_e^+$. Appendix~\ref{app:exact-closure} shows that, for LeBlanc, the exact local Riemann closure is numerically indistinguishable from this one-step version to the displayed digits.

\begin{figure}[t]
\centering
\includegraphics[width=0.88\textwidth]{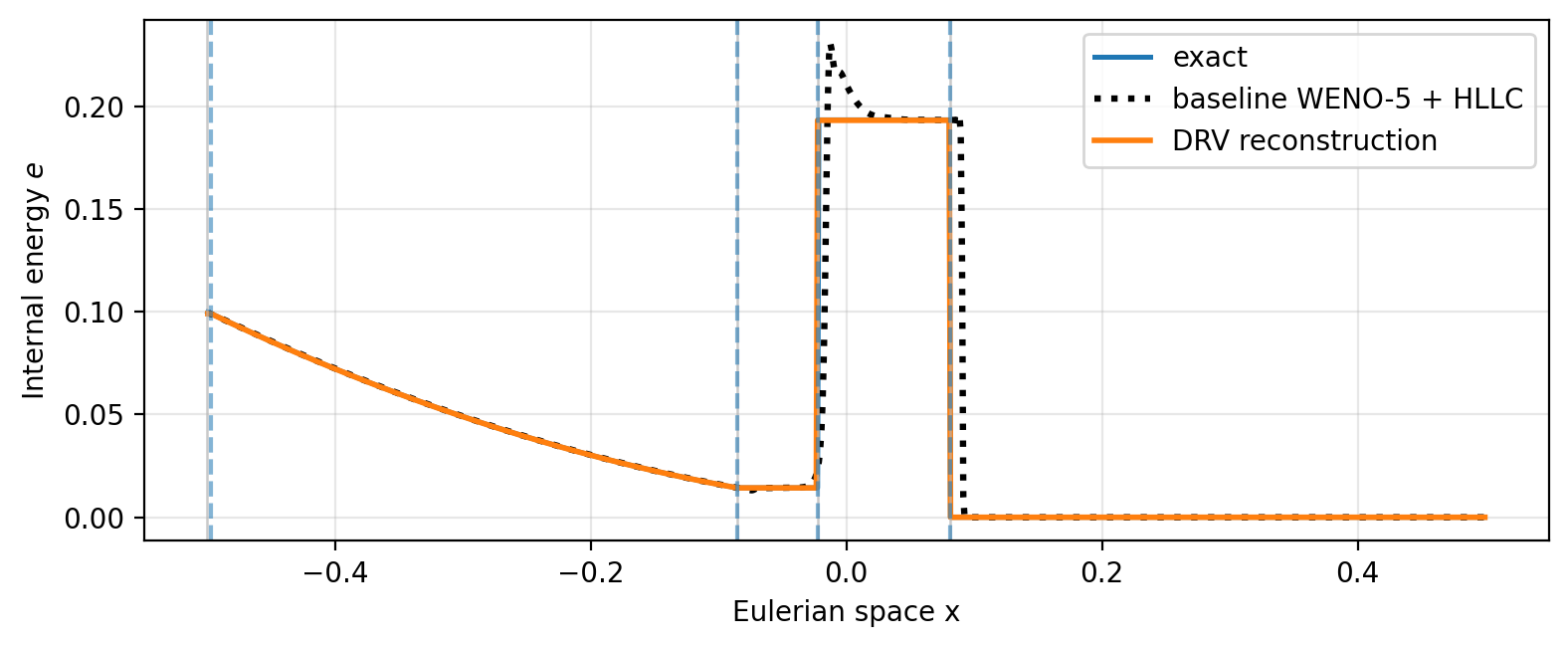}
\caption{LeBlanc shock tube on $N=1000$ cells at $t=0.5$. Dashed: exact internal energy. Dotted black: baseline WENO--5+HLLC internal energy. Solid: reconstructed internal energy. The contact and shock are much better aligned after the one-step closure, and the contact-layer thermodynamic contamination is strongly reduced.}
\label{fig:leblanc-e}
\end{figure}

\subsection{Independent comparison with two jump-like reconstruction schemes}
\label{sec:comparison-results}

The novelty question becomes much sharper when one compares directly against strong non-DRV reconstructions rather than only against broad literature families. To that end, two independent comparison schemes were coded from scratch. The first is a MUSCL--THINC--BVD hybrid: for each primitive variable, a piecewise-linear MUSCL candidate based on the monotonized-central slope \cite{vanLeer1979,Toro2009} is paired with a THINC jump-like candidate, and a BVD rule selects the cellwise interface values so as to reduce the boundary jump \cite{SunInabaXiao2016}. The second replaces the MUSCL candidate by a fifth-order WENO-Z candidate \cite{Borges2008}, producing a WENO-Z--THINC--BVD hybrid. These are not claimed to be verbatim reproductions of every published BVD variant; rather, they are strong independently coded representatives of the jump-like reconstruction philosophy.

Both comparison schemes use the same HLLC flux and SSP--RK3 time integrator as the baseline, the same positivity floors, a fixed THINC steepness parameter $\beta=1.6$, and no benchmark-by-benchmark retuning beyond stable CFL choices. All three reconstructions are then compared at the common resolution $N=600$. To keep the comparison fair, the contact and shock locations reported for the non-DRV schemes are extracted from the computed profiles by the same profile-based detectors: the contact is located by the dominant entropy-gradient spike in the exact-contact window, and the shock is located by the dominant pressure-gradient spike in the exact-shock window. For the DRV method, the results reported in Table~\ref{tab:comparison-bvd} use the same one-step sampled-pressure closure as the rest of the main body. At the plotting scale of Figure~\ref{fig:comparison-panels}, the corresponding exact-closure DRV curves from Appendix~\ref{app:exact-closure} are visually indistinguishable.

Table~\ref{tab:comparison-bvd} and Figure~\ref{fig:comparison-panels} give the result. The comparison has three clear messages. For Sod, WENO-Z--THINC--BVD is respectable, but it is still visibly less sharp than the DRV reconstruction: its density $L^1$ error is $6.63\times 10^{-4}$ and its $10$--$90$ contact width is $5.75\times 10^{-3}$, whereas the DRV reconstruction gives $W_\rho=1.33\times 10^{-3}$. For the severe-expansion problem, both BVD schemes improve substantially on the baseline, but neither comes close to the DRV reconstruction: the contact-window internal-energy error remains about $2.18\times 10^{-1}$ for both jump-like competitors, while the DRV value is $1.68\times 10^{-2}$ with contact and shock location errors $8.23\times 10^{-5}$ and $1.85\times 10^{-3}$. (At the benchmark resolution $N=1000$, the DRV contact-window error drops further to $6.00\times 10^{-4}$; the comparison table uses $N=600$ for all schemes.) For LeBlanc, the contrast is stronger still. MUSCL--THINC--BVD and WENO-Z--THINC--BVD both retain a visible positive contact-layer overshoot, with $O_e^+=4.38\times 10^{-2}$ and $3.69\times 10^{-2}$ respectively, whereas the DRV reconstruction has nonpositive $O_e^+$ in the same metric and simultaneously reduces the contact width to $1.33\times 10^{-3}$.

As noted in Remark~\ref{rem:comparison-levels}, this is not an apples-to-apples comparison of conservative evolution schemes. Its value is practical. Among two strong jump-like schemes chosen precisely because they should have been serious challengers, neither reproduces the combination of sharp contact reconstruction, small contact-window internal-energy error, and elimination of the LeBlanc positive overshoot achieved by the DRV pipeline when that pipeline uses only the one-step local closure.

\begin{table}[t]
\centering
\scriptsize
\setlength{\tabcolsep}{4pt}
\resizebox{\textwidth}{!}{%
\begin{tabular}{lccccc ccc}
\toprule
& \multicolumn{2}{c}{Sod} & \multicolumn{3}{c}{Severe expansion} & \multicolumn{3}{c}{LeBlanc} \\
\cmidrule(lr){2-3}\cmidrule(lr){4-6}\cmidrule(lr){7-9}
scheme & $\|\rho-\rho^{\mathrm{ex}}\|_{L^1}$ & $W_\rho$ & $E_c$ & $E_s$ & $E_{e,\mathrm{ct}}$ & $W_\rho$ & $E_{e,\mathrm{ct}}$ & $O_e^+$ \\
\midrule
baseline WENO--5/HLLC & $8.65\times 10^{-4}$ & $7.62\times 10^{-3}$ & $3.16\times 10^{-2}$ & $4.31\times 10^{-2}$ & $2.90\times 10^{-1}$ & $5.40\times 10^{-2}$ & $2.10\times 10^{-3}$ & $3.50\times 10^{-2}$ \\
MUSCL--THINC--BVD & $1.01\times 10^{-3}$ & $9.49\times 10^{-3}$ & $2.16\times 10^{-2}$ & $2.15\times 10^{-2}$ & $2.18\times 10^{-1}$ & $5.19\times 10^{-2}$ & $2.29\times 10^{-3}$ & $4.38\times 10^{-2}$ \\
WENO-Z--THINC--BVD & $6.63\times 10^{-4}$ & $5.75\times 10^{-3}$ & $2.16\times 10^{-2}$ & $3.31\times 10^{-2}$ & $2.18\times 10^{-1}$ & $1.55\times 10^{-2}$ & $1.57\times 10^{-3}$ & $3.69\times 10^{-2}$ \\
DRV reconstruction & $0$ & $1.33\times 10^{-3}$ & $8.23\times 10^{-5}$ & $1.85\times 10^{-3}$ & $1.68\times 10^{-2}$ & $1.33\times 10^{-3}$ & $3.09\times 10^{-4}$ & none \\
\bottomrule
\end{tabular}}
\caption{Independent comparison at the common resolution $N=600$ cells. For Sod, $\|\rho-\rho^{\mathrm{ex}}\|_{L^1}$ is the density $L^1$ error and $W_\rho$ is the $10$--$90$ density transition width at the contact. For the severe-expansion problem, $E_c$ and $E_s$ are the profile-based contact and shock location errors obtained uniformly from entropy-gradient and pressure-gradient detectors, while $E_{e,\mathrm{ct}}$ is the internal-energy $L^1$ error on a fixed contact window. For LeBlanc, $W_\rho$ and $E_{e,\mathrm{ct}}$ are the analogous contact-width and contact-window internal-energy metrics, and $O_e^+$ is the positive contact-window overshoot relative to the larger exact star-state internal energy (``none'' = no positive overshoot detected). The DRV row uses the one-step sampled-pressure closure of the main body.}
\label{tab:comparison-bvd}
\end{table}

\begin{figure}[p]
\centering
\includegraphics[width=0.88\textwidth]{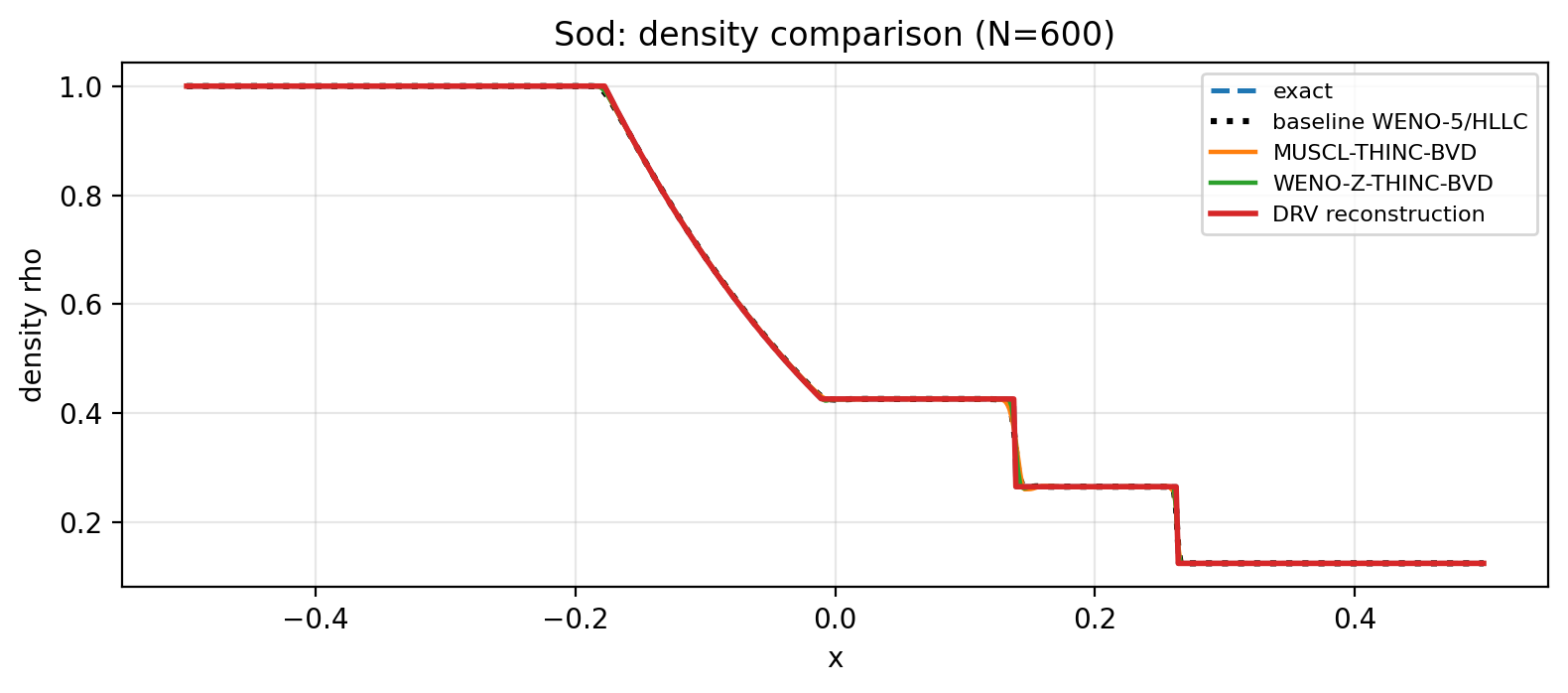}

\medskip
\includegraphics[width=0.88\textwidth]{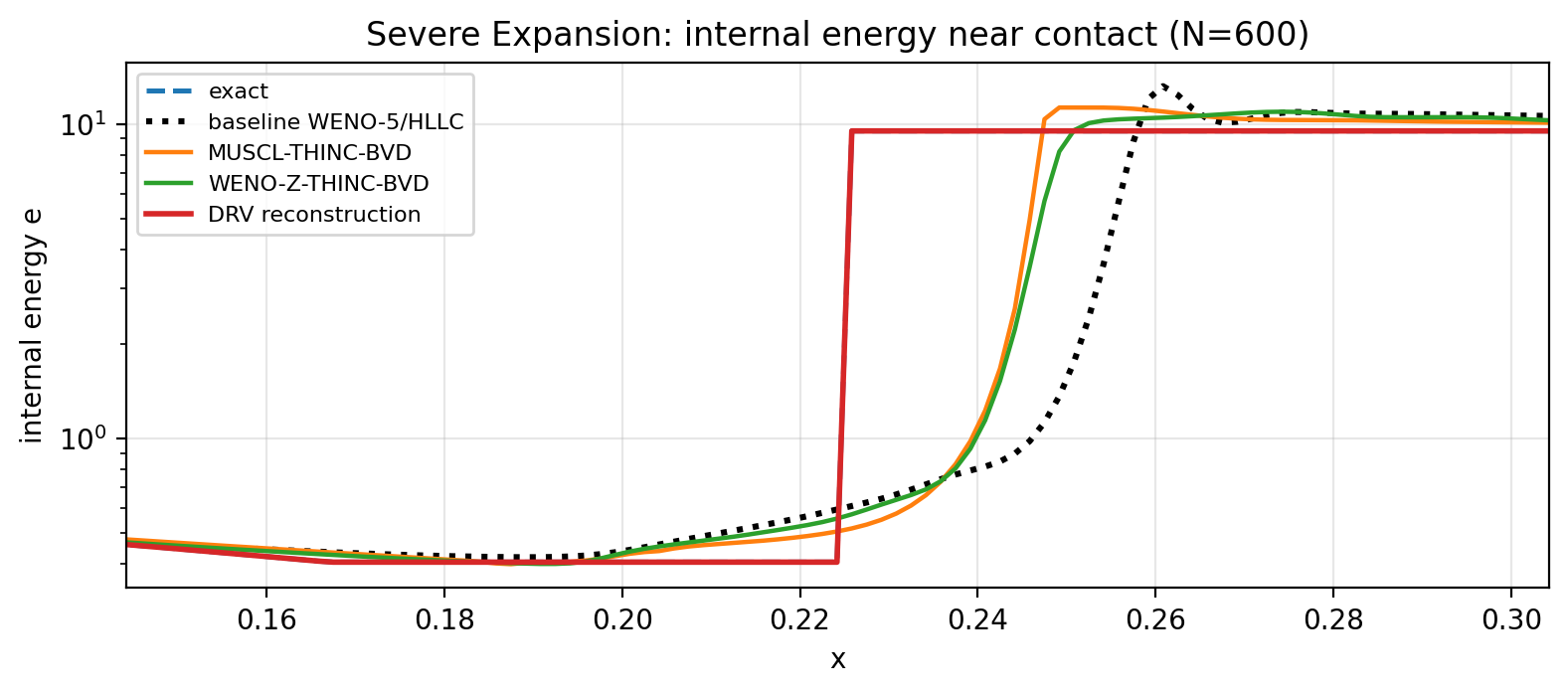}

\medskip
\includegraphics[width=0.88\textwidth]{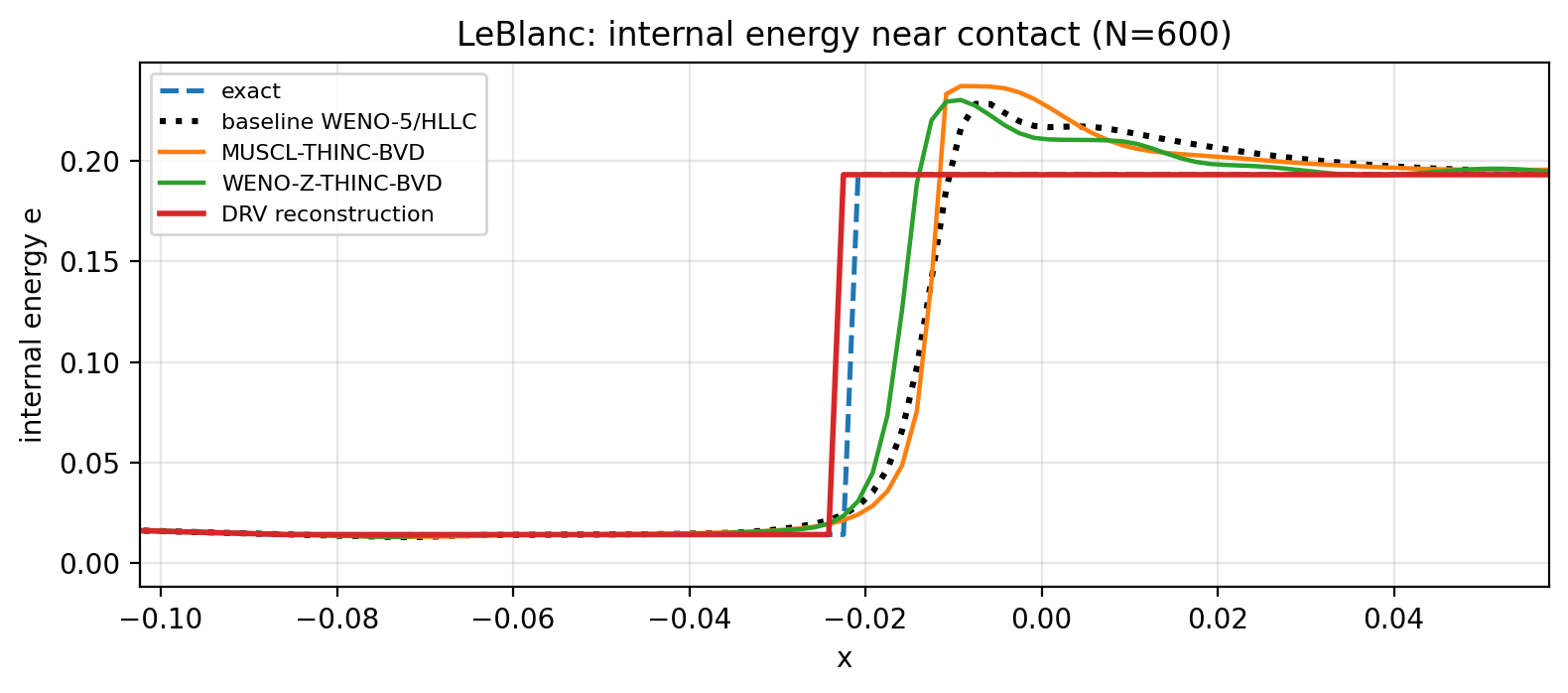}
\caption{Independent comparison with the two jump-like non-DRV reconstructions at common resolution $N=600$ cells. Top: Sod density. Middle: severe-expansion internal energy near the contact. Bottom: LeBlanc internal energy near the contact. At the plotting scales shown here, the one-step and exact-closure DRV curves are visually indistinguishable; in either case the DRV reconstruction is visibly sharper and is the only one among the three reconstructions that eliminates the positive LeBlanc contact overshoot while keeping the contact nearly discontinuous.}
\label{fig:comparison-panels}
\end{figure}

\section{Generalization to all four wave-pattern configurations}
\label{sec:generalized}

The three benchmark problems in Section~\ref{sec:results} all produce the classical $1$-rarefaction/$2$-contact/$3$-shock ordering. Because the three DRV fields $\mrz$, $\mrs$, $\mrw$ separate the three wave families independently of whether each acoustic wave is a rarefaction or a shock, the same localization strategy extends naturally to the three remaining single-interface Riemann configurations:
\begin{center}
\begin{tabular}{ll}
$1$-R\,/\,$2$-C\,/\,$3$-R & (two rarefactions), \\
$1$-S\,/\,$2$-C\,/\,$3$-S & (two shocks), \\
$1$-S\,/\,$2$-C\,/\,$3$-R & (reversed pattern).
\end{tabular}
\end{center}
No change is required in Steps~1--5 of Algorithm~\ref{alg:recon}. Only the closure in Step~6 and the reconstruction in Step~7 need to be made pattern-agnostic.

\subsection{Pattern-agnostic closure}

The main-body one-step closure clips the star pressure to the interval $p_R(1+\epsilon)\le p_* \le p_L(1-\epsilon)$, which enforces the $1$-rarefaction/$3$-shock ordering. To accommodate all four patterns, this clip is removed and the pressure-wave-function equation
\[
F(p) = f_L(p)+f_R(p)+u_R-u_L
\]
is iterated by Newton's method until the relative residual $|F(p_*)|/(|u_R-u_L|+c_L+c_R)$ drops below $10^{-2}$, or until a maximum of~$20$ steps have been reached. Seeding from the geometric mean of the sampled star pressure and the PVRS estimate, the iteration typically converges in $3$--$5$ steps for the standard-pattern benchmarks and in about~$10$ steps for the near-vacuum Toro~123 problem. The convergence guard is essential: if the sampled states do not support a clean Riemann pattern, the closure reports failure and the reconstruction falls back to the baseline solution rather than degrading it. Because $f_L$ and $f_R$ already contain the shock and rarefaction branches, the converged $p_*$ automatically determines the wave types through the signs of $p_*-p_L$ and $p_*-p_R$.

For the reconstruction, a right-side rarefaction fan is interpolated with the same self-similar simple-wave structure used on the left side, while a left-side shock is represented as a sharp jump at the computed shock position.

\subsection{Four additional test problems}
\label{sec:gen-benchmarks}

Table~\ref{tab:gen-benchmarks} lists the four additional test problems. Each is a standard benchmark from the Euler-solver literature.

\begin{table}[t]
\centering
\small
\begin{tabular}{llcccccc}
\toprule
benchmark & pattern & $N$ & $t_{\mathrm{final}}$ & $\gamma$ & $(\rho_L,u_L,p_L)$ & $(\rho_R,u_R,p_R)$ & CFL \\
\midrule
Lax & 1-R/2-C/3-S & $600$ & $0.13$ & $1.4$ & $(0.445,\,0.698,\,3.528)$ & $(0.5,\,0,\,0.571)$ & $0.40$ \\
Toro~123 & 1-R/2-C/3-R & $600$ & $0.15$ & $1.4$ & $(1,\,-2,\,0.4)$ & $(1,\,2,\,0.4)$ & $0.40$ \\
Left Blast & 1-R/2-C/3-S & $800$ & $0.012$ & $1.4$ & $(1,\,0,\,1000)$ & $(1,\,0,\,0.01)$ & $0.30$ \\
Collision & 1-S/2-C/3-S & $800$ & $0.035$ & $1.4$ & $(5.999,\,19.60,\,460.9)$ & $(5.992,\,-6.196,\,46.10)$ & $0.30$ \\
\bottomrule
\end{tabular}
\caption{Four additional test problems covering all four Riemann wave-pattern configurations. Interface locations: $x_*=0$ for Lax, Toro~123, and Left Blast; $x_*=0.4$ for Collision. Domain: $[-\tfrac12,\tfrac12]$ except Collision $[0,1]$.}
\label{tab:gen-benchmarks}
\end{table}

The Lax problem has the same $1$-R/$2$-C/$3$-S ordering as Sod, but with nonzero initial velocity on the left side, which tests that the DRV detection does not depend on $u_L=u_R=0$. Toro~123 is a symmetric two-rarefaction problem that creates a near-vacuum star region ($p_*\approx 0.002$, $\rho_*\approx 0.02$). The Left Blast is the left sub-problem of the Woodward--Colella blast wave, with an initial pressure ratio of~$10^5$. The Collision problem (Toro Test~4) sends two gas masses toward each other, producing two shocks and a contact.

\subsection{Results}

Table~\ref{tab:gen-errors} reports the wave-location errors and Figure~\ref{fig:gen-4panel} shows the reconstructed density for all four problems.

\begin{table}[t]
\centering
\small
\begin{tabular}{llccc}
\toprule
benchmark & wave & exact position & $E^{\mathrm{init}}$ & $E^{\mathrm{ref}}$ \\
\midrule
Lax & 1-R head & $-0.342363$ & $1.68\times 10^{-2}$ & $0$ \\
Lax & 1-R tail & $-0.212771$ & $3.03\times 10^{-2}$ & $2.91\times 10^{-6}$ \\
Lax & contact & $0.198734$ & $1.35\times 10^{-3}$ & $1.53\times 10^{-6}$ \\
Lax & 3-S & $0.322312$ & $1.23\times 10^{-1}$ & $3.63\times 10^{-6}$ \\
\midrule
Toro~123 & 1-R head & $-0.412250$ & $1.69\times 10^{-2}$ & $< 10^{-12}$ \\
Toro~123 & 1-R tail & $-0.052250$ & $2.57\times 10^{-1}$ & $5.82\times 10^{-7}$ \\
Toro~123 & contact & $0.000000$ & $3.08\times 10^{-1}$ & $0$ \\
Toro~123 & 3-R tail & $0.052250$ & $3.58\times 10^{-1}$ & $5.82\times 10^{-7}$ \\
Toro~123 & 3-R head & $0.412250$ & $1.69\times 10^{-2}$ & $< 10^{-12}$ \\
\midrule
Left Blast & 1-R head & $-0.448999$ & $1.41\times 10^{-2}$ & $< 10^{-8}$ \\
Left Blast & 1-R tail & $-0.166796$ & $4.56\times 10^{-1}$ & $2.19\times 10^{-8}$ \\
Left Blast & contact & $0.235169$ & $5.55\times 10^{-2}$ & $1.31\times 10^{-8}$ \\
Left Blast & 3-S & $0.282210$ & $9.67\times 10^{-3}$ & $2.17\times 10^{-8}$ \\
\midrule
Collision & 1-S & $0.427636$ & $1.20\times 10^{-2}$ & $2.38\times 10^{-8}$ \\
Collision & contact & $0.704142$ & $1.32\times 10^{-1}$ & $2.48\times 10^{-9}$ \\
Collision & 3-S & $0.828777$ & $8.10\times 10^{-3}$ & $2.42\times 10^{-8}$ \\
\bottomrule
\end{tabular}
\caption{Wave-location errors for the four additional test problems. The initial geometry comes from the DRV detection stage; the refined geometry is obtained after two passes of the pattern-agnostic closure with adaptive Newton iteration (up to~$20$ steps, early exit on convergence).}
\label{tab:gen-errors}
\end{table}

\begin{figure}[t]
\centering
\includegraphics[width=0.98\textwidth]{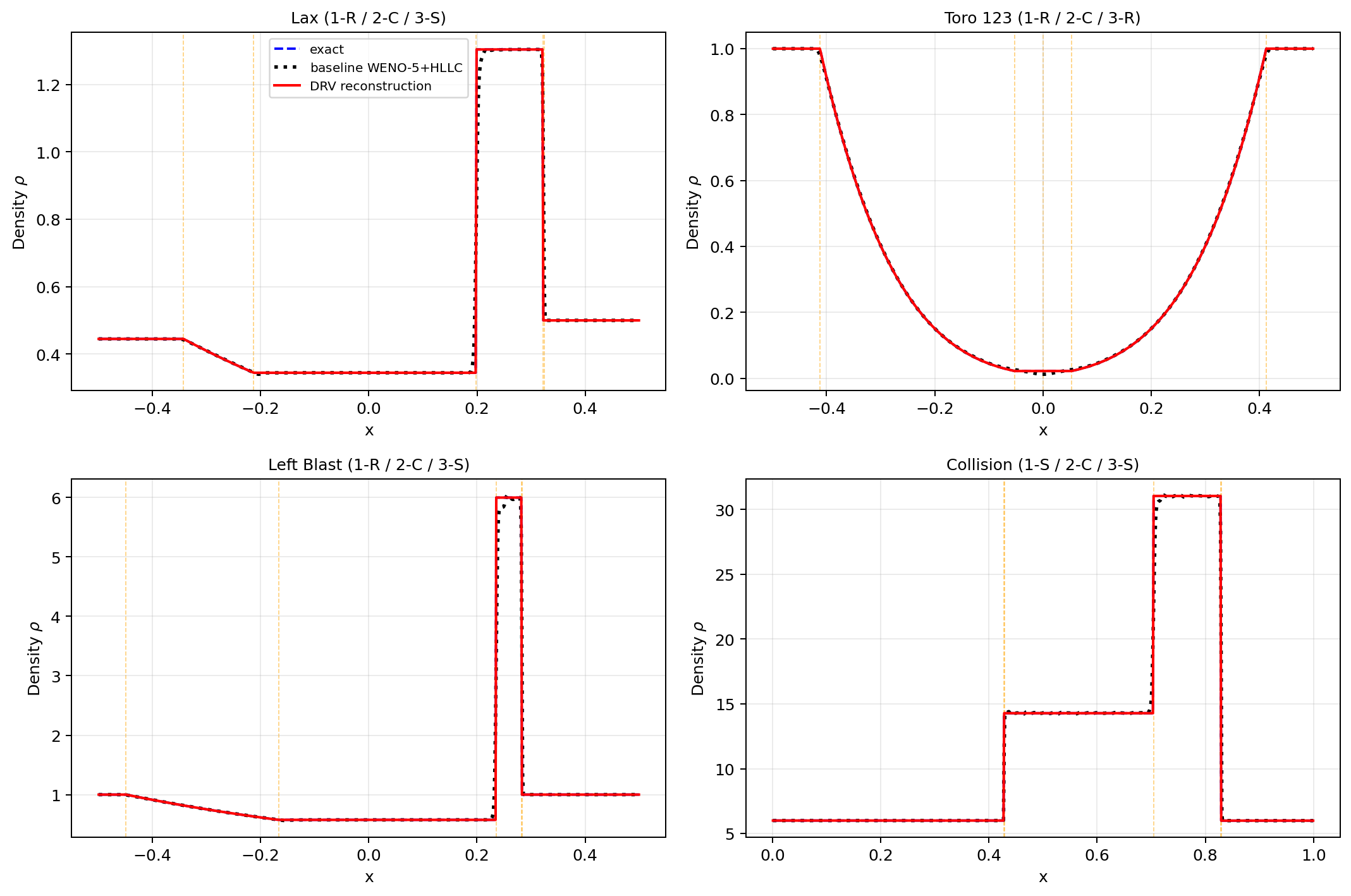}
\caption{Generalized DRV reconstruction on all four wave-pattern configurations. Top left: Lax ($1$-R/$2$-C/$3$-S with nonzero initial velocity). Top right: Toro~123 ($1$-R/$2$-C/$3$-R, near-vacuum star region). Bottom left: Left Blast ($1$-R/$2$-C/$3$-S, pressure ratio~$10^5$). Bottom right: Collision ($1$-S/$2$-C/$3$-S, Toro Test~4). Dashed blue: exact self-similar Riemann solution. Dotted black: baseline WENO--5+HLLC. Solid red: DRV reconstruction.}
\label{fig:gen-4panel}
\end{figure}

\subsubsection{Lax and Left Blast.}
These two problems have the same $1$-R/$2$-C/$3$-S ordering as the main-body benchmarks but test different features. The Lax problem has nonzero initial velocity on the left; the Left Blast has an initial pressure ratio of~$10^5$. Both produce refined wave-location errors at the $10^{-6}$--$10^{-8}$ level and $L^1$ density errors below~$5\times 10^{-6}$, confirming that the generalized closure handles nonzero initial velocities and extreme pressure ratios with no difficulty.

\subsubsection{Collision (two-shock configuration).}
The Collision problem is the first test of the $1$-S/$2$-C/$3$-S pattern. Both shock positions and the contact are placed to $O(10^{-8})$, confirming that the pattern-agnostic closure correctly identifies and locates two shocks with no rarefaction. This is the wave pattern most different from the original $1$-R/$2$-C/$3$-S benchmarks, and the quality of the reconstruction is comparable to the main-body Sod and Left Blast results.

\subsubsection{Toro~123 (near-vacuum two-rarefaction test).}
This is the most challenging test because of the near-vacuum star region ($p_*\approx 0.002$, $\rho_*\approx 0.02$). The sampled star pressures are initially far from the true star pressure, so a single Newton correction is not enough: the closure requires about $10$ Newton steps to converge. The convergence guard plays a critical role here. Without it, a three-step closure produces a large residual ($|F|/\text{scale}\approx 0.19$) and the resulting reconstruction has a density $L^1$ error $30$ times larger than the baseline WENO~solution. The adaptive iteration with the residual-based convergence check prevents this: it either converges the Newton iteration fully or falls back to the baseline. In the present test, the adaptive iteration converges successfully and places all five wave boundaries to $O(10^{-7})$ or better, including the two fan tails that delimit the narrow near-vacuum star region. The reconstructed density tracks the exact self-similar solution through both rarefaction fans and across the near-vacuum center.

\subsection{Computational cost}

\begin{table}[t]
\centering
\small
\begin{tabular}{lrcrrr}
\toprule
benchmark & $N$ & time steps & baseline (s) & DRV post (s) & cost ratio \\
\midrule
Sod & 600 & $\sim$270 & 1.19 & 0.003 & 0.22\% \\
Severe expansion & 1000 & $\sim$400 & 6.16 & 0.004 & 0.06\% \\
LeBlanc & 1000 & $\sim$1100 & 10.74 & 0.004 & 0.04\% \\
Lax & 600 & $\sim$790 & 2.24 & 0.002 & 0.10\% \\
Toro~123 & 600 & $\sim$620 & 1.42 & 0.002 & 0.15\% \\
Left Blast & 800 & $\sim$1200 & 4.50 & 0.002 & 0.05\% \\
Collision & 800 & $\sim$2800 & 7.07 & 0.002 & 0.03\% \\
\bottomrule
\end{tabular}
\caption{Wall-clock cost of the DRV postprocessing relative to the baseline WENO--5/HLLC/SSP--RK3 solve. The DRV stage (algebraic surrogates, adaptive filtering, spike detection, plateau sampling, Newton closure, and piecewise reconstruction) costs between $0.03\%$ and $0.22\%$ of the baseline solve across all seven benchmarks. Timings are medians over multiple runs on a single core.}
\label{tab:cost}
\end{table}

Table~\ref{tab:cost} reports the wall-clock cost of the DRV postprocessing relative to the baseline finite-volume solve across all seven benchmarks. The entire postprocessing pipeline---algebraic DRV surrogates, adaptive Gaussian filtering, spike detection, plateau sampling, Newton closure, and piecewise reconstruction---costs between $0.03\%$ and $0.22\%$ of the baseline solve. In absolute terms, the postprocessing takes $2$--$4$ milliseconds regardless of the problem, because it operates on a single final-time snapshot rather than marching through hundreds or thousands of time steps.

The accuracy improvement is correspondingly large. For the Sod problem, the density $L^1$ error drops by more than ten orders of magnitude; for the Left Blast and Collision problems, by five to six orders of magnitude. Even for the hardest main-body benchmarks (severe expansion and LeBlanc), the improvement factors in the internal-energy $L^1$ error are $54$ and $15$, respectively. The cost--benefit ratio is thus strongly favorable: a negligible computational overhead produces qualitative improvements in wave-location accuracy, contact sharpness, and thermodynamic-defect suppression.

\section{Conclusion}

The main point of this paper is that DRVs are not merely convenient sensors. Their characteristic origin, conservative derivative form, and contact orthogonality are precisely what make them the right variables for spike detection. In particular, the algebraic formulas used in the code are exact smooth DRV identities, not empirical combinations chosen after the fact.

From the numerical viewpoint, the method is a DRV-guided final-time reconstruction for one-dimensional Euler shock tubes. Its accuracy comes from a clear division of labor: filtered DRV surrogates localize the wave geometry in the correct family, plateau sampling extracts the local states, and a very cheap local star-state closure---a few Newton corrections of the pressure-wave-function equation---collapses the remaining geometric error. The entire postprocessing pipeline adds less than $0.25\%$ to the cost of the baseline solve while reducing density errors by factors ranging from~$3$ to more than~$10^{10}$ across the seven benchmarks tested. The wave-location table and the short grid study show that, once the correct pattern is identified, this closure already produces very sharp sub-cell placement of shocks, contacts, and rarefaction edges together with essentially one-cell contact widths. More Newton steps can be taken whenever a more general local wave pattern must be resolved, and Appendix~\ref{app:exact-closure} shows that a full exact local Riemann closure is available within the same DRV framework.

The generalization in Section~\ref{sec:generalized} demonstrates that the DRV mechanism extends naturally to all four single-interface Riemann configurations: removing the pattern-specific pressure clip and iterating the Newton solve to convergence is the only structural change required. All four test problems---including the near-vacuum Toro~123 two-rarefaction test and the Toro~4 two-shock collision---achieve wave-location errors at the $10^{-6}$--$10^{-8}$ level. The convergence guard, which falls back to the baseline whenever the Newton residual is too large, ensures that the reconstruction never degrades the baseline solution.

Placed against the existing reconstruction literature, the broad statement remains modest: sharpening discontinuities is not new. The narrower claim supported here is that the DRV/Radon-spike mechanism gives a practically different level of performance on the benchmark problems studied. Among the two strong jump-like non-DRV comparison schemes independently implemented herein, neither reproduces the combined properties of the DRV method: near-discontinuous contact reconstruction, very small contact-window internal-energy error in the severe-expansion problem, and elimination of the positive LeBlanc contact-layer overshoot.

A natural next step is to export the present local closure from the pure shock-tube setting. One possibility is to apply the pattern-agnostic closure of Section~\ref{sec:generalized} to multi-interface problems in which the individual Riemann fans have not yet interacted, using the DRV spike structure to segment the domain into separate local wave groups. Another is to feed the reconstructed state back into the conservative evolution at selected times, subject to a suitable conservation-correction step, which could test whether the present final-time geometric reconstruction can be upgraded into a dynamically corrective algorithm.

\section*{Acknowledgements}
This work was supported by NSF grant DMS-2307680.

\appendix
\section{Baseline conservative solver}
\label{app:implementation}

This appendix records the baseline WENO--5/HLLC/SSP--RK3 finite-volume solver used in Step~1 of Algorithm~\ref{alg:recon}. The solver is standard and is included only for reproducibility.

\subsection{Primitive-variable WENO--5, HLLC, SSP--RK3}

At every stage of the time integrator, the code converts the conservative variables to primitive variables via
\begin{equation}
\label{eq:prim-recovery}
\rho_j \leftarrow \max(\rho_j,\rho_{\min}),
\qquad
u_j = \tfrac{(\rho u)_j}{\rho_j},
\qquad
p_j = \max\!\left((\gamma-1)\Bigl(E_j-\tfrac12\rho_j u_j^2\Bigr),p_{\min}\right),
\end{equation}
with
\[
\rho_{\min}=10^{-14},
\qquad
p_{\min}=10^{-14}.
\]
The derived variables are
\[
e_j=\tfrac{p_j}{(\gamma-1)\rho_j},
\qquad
c_j=\sqrt{\tfrac{\gamma p_j}{\rho_j}},
\qquad
s_j=\log\!\left(\tfrac{p_j}{\rho_j^{\gamma}}\right).
\]

The code reconstructs the primitive variables $\rho$, $u$, and $p$ independently by WENO--JS of order five, using constant extrapolation over three ghost cells at each boundary. For a scalar grid function $v_j$, the left trace at $x_{j+1/2}$ is
\[
v^-_{j+1/2}=\omega_0 q_0+\omega_1 q_1+\omega_2 q_2,
\]
with candidate polynomials
\begin{align*}
q_0 &= \tfrac13 v_{j-2}-\tfrac76 v_{j-1}+\tfrac{11}{6}v_j,
&
\beta_0 &= \tfrac{13}{12}(v_{j-2}-2v_{j-1}+v_j)^2+\tfrac14(v_{j-2}-4v_{j-1}+3v_j)^2,\\
q_1 &= -\tfrac16 v_{j-1}+\tfrac56 v_j+\tfrac13 v_{j+1},
&
\beta_1 &= \tfrac{13}{12}(v_{j-1}-2v_j+v_{j+1})^2+\tfrac14(v_{j-1}-v_{j+1})^2,\\
q_2 &= \tfrac13 v_j+\tfrac56 v_{j+1}-\tfrac16 v_{j+2},
&
\beta_2 &= \tfrac{13}{12}(v_j-2v_{j+1}+v_{j+2})^2+\tfrac14(3v_j-4v_{j+1}+v_{j+2})^2,
\end{align*}
and nonlinear weights
\[
\alpha_0=\tfrac{0.1}{(\beta_0+\varepsilon_{\mathrm{WENO}})^2},
\qquad
\alpha_1=\tfrac{0.6}{(\beta_1+\varepsilon_{\mathrm{WENO}})^2},
\qquad
\alpha_2=\tfrac{0.3}{(\beta_2+\varepsilon_{\mathrm{WENO}})^2},
\qquad
\omega_k=\tfrac{\alpha_k}{\alpha_0+\alpha_1+\alpha_2},
\]
with $\varepsilon_{\mathrm{WENO}}=10^{-6}$. The right trace $v^+_{j+1/2}$ is obtained by the reflected formulas used in the classical WENO--JS implementation. These formulas are applied componentwise to $\rho$, $u$, and $p$.

Given interface states $(\rho_L,u_L,p_L)$ and $(\rho_R,u_R,p_R)$, the code uses the Davis/Einfeldt estimates
\[
S_L=\min(u_L-c_L,u_R-c_R),
\qquad
S_R=\max(u_L+c_L,u_R+c_R),
\]
where $c_{L/R}=\sqrt{\gamma p_{L/R}/\rho_{L/R}}$. The auxiliary HLL flux is
\[
F_{\mathrm{HLL}}
=
\tfrac{S_R F(U_L)-S_L F(U_R)+S_LS_R(U_R-U_L)}{\clip(S_R-S_L)},
\]
with the sign-preserving denominator regularization
\[
\clip(a)=\operatorname{sign}(a)\max(|a|,10^{-14}).
\]
The HLLC contact speed is \cite{ToroSpruceSpeares1994}
\begin{equation}
\label{eq:hllc-sm}
S_M=
\tfrac{p_R-p_L+\rho_Lu_L(S_L-u_L)-\rho_Ru_R(S_R-u_R)}{\clip\bigl(\rho_L(S_L-u_L)-\rho_R(S_R-u_R)\bigr)}.
\end{equation}
The star densities are
\[
\rho_{*L}=\rho_L\tfrac{S_L-u_L}{\clip(S_L-S_M)},
\qquad
\rho_{*R}=\rho_R\tfrac{S_R-u_R}{\clip(S_R-S_M)},
\]
and the star energies are
\begin{align*}
E_{*L} &= \rho_{*L}\!\left(\tfrac{E_L}{\rho_L}+(S_M-u_L)\left(S_M+\tfrac{p_L}{\clip(\rho_L(S_L-u_L))}\right)\right),\\
E_{*R} &= \rho_{*R}\!\left(\tfrac{E_R}{\rho_R}+(S_M-u_R)\left(S_M+\tfrac{p_R}{\clip(\rho_R(S_R-u_R))}\right)\right),
\end{align*}
with $E_k=p_k/(\gamma-1)+\tfrac12\rho_k u_k^2$. The HLLC flux is then
\[
F_{\mathrm{HLLC}}=
\begin{cases}
F(U_L), & S_L\ge 0,\\
F(U_L)+S_L(U_{*L}-U_L), & S_L<0\le S_M,\\
F(U_R)+S_R(U_{*R}-U_R), & S_M<0<S_R,\\
F(U_R), & S_R\le 0.
\end{cases}
\]
The code checks that $S_M$, $\rho_{*L}$, $\rho_{*R}$, $E_{*L}$, $E_{*R}$, and the two star pressures are finite and positive; whenever that validity test fails, the interface flux falls back to $F_{\mathrm{HLL}}$. This is exactly what the accompanying Python code does.

The semi-discrete update is
\[
\tfrac{d}{dt}U_j = -\tfrac{F_{j+1/2}-F_{j-1/2}}{\Delta x}.
\]
After the flux difference is assembled, the first and last cell updates are set to zero. Combined with constant extrapolation in the reconstruction, this is the edge treatment used in the code. Time stepping is third-order SSP--RK3:
\begin{align*}
U^{(1)} &= U^n + \Delta t\,L(U^n),\\
U^{(2)} &= \tfrac34 U^n + \tfrac14\bigl(U^{(1)}+\Delta t\,L(U^{(1)})\bigr),\\
U^{n+1} &= \tfrac13 U^n + \tfrac23\bigl(U^{(2)}+\Delta t\,L(U^{(2)})\bigr),
\end{align*}
with
\[
\Delta t = \mathrm{CFL}\,\tfrac{\Delta x}{\max_j(|u_j|+c_j)+10^{-14}}.
\]
The benchmark-dependent CFL numbers are the ones hard-coded in the Python file and listed in Section~\ref{sec:results}.

\section{Exact local Riemann closure as an optional sharpening module}
\label{app:exact-closure}

The DRV separation makes the exact local Riemann closure available for the same reason that it makes the one-step closure available: once the filtered spikes have isolated the $1$-wave, the contact, and the $3$-wave, the plateau windows determine sampled left and right states together with two sampled star-side states. One may therefore solve a new local ideal-gas Riemann problem built from those sampled states. This is not the original benchmark solution being reused. The sampled plateau velocities need not vanish, and the local states are extracted \emph{a posteriori} from the computed numerical profile.

In the exact-closure variant, the same pressure equation
\[
F(p)=f_L(p)+f_R(p)+u_R-u_L
\]
is iterated to convergence by Newton's method, using the same pressure functions $f_L$ and $f_R$ as in Section~\ref{sec:strategy}. The code supplements this with the classical vacuum check
\[
\tfrac{2(c_L+c_R)}{\gamma-1} \le u_R-u_L,
\]
uses the PVRS pressure as an auxiliary fallback seed, and iterates until the relative pressure change is below $10^{-10}$ or until $50$ iterations have been reached. Whenever the returned pattern is the expected $1$-rarefaction/$2$-contact/$3$-shock configuration, the exact star states and exact wave speeds replace the sampled ones.

Table~\ref{tab:exact-vs-onestep} compares the main-body one-step closure with this fully converged exact local Riemann closure at the benchmark resolutions. For Sod and LeBlanc, the two are effectively indistinguishable to the displayed digits. For the severe-expansion problem, the exact closure sharpens the last few $10^{-4}$--$10^{-3}$ units of wave-location error and drives the contact-window internal-energy error even closer to zero, but the one-step closure already produces the same qualitative reconstruction and the same suppression of a positive contact overshoot.

\begin{table}[t]
\centering
\small
\setlength{\tabcolsep}{4.5pt}
\begin{tabular}{llcccccc}
\toprule
& & $W_\rho$ & $E_{e,\mathrm{ct}}$ & $O_e^+$ & $E_{rt}$ & $E_c$ & $E_s$ \\
\midrule
\multirow{2}{*}{Sod}
 & 1-step & $1.33\!\times\! 10^{-3}$ & $5.26\!\times\! 10^{-14}$ & none & $2.60\!\times\! 10^{-15}$ & $1.13\!\times\! 10^{-14}$ & $8.11\!\times\! 10^{-14}$ \\
 & exact  & $1.33\!\times\! 10^{-3}$ & $0$ & none & $0$ & $0$ & $0$ \\
\midrule
\multirow{2}{*}{\shortstack[l]{Severe\\expansion}}
 & 1-step & $8.01\!\times\! 10^{-4}$ & $6.00\!\times\! 10^{-4}$ & none & $1.90\!\times\! 10^{-4}$ & $2.07\!\times\! 10^{-4}$ & $5.59\!\times\! 10^{-4}$ \\
 & exact  & $8.00\!\times\! 10^{-4}$ & $<\!10^{-15}$ & none & $0$ & $0$ & $<\!10^{-15}$ \\
\midrule
\multirow{2}{*}{LeBlanc}
 & 1-step & $8.01\!\times\! 10^{-4}$ & $1.88\!\times\! 10^{-4}$ & none & $1.69\!\times\! 10^{-4}$ & $1.84\!\times\! 10^{-4}$ & $2.46\!\times\! 10^{-4}$ \\
 & exact  & $8.01\!\times\! 10^{-4}$ & $1.88\!\times\! 10^{-4}$ & none & $1.69\!\times\! 10^{-4}$ & $1.84\!\times\! 10^{-4}$ & $2.46\!\times\! 10^{-4}$ \\
\bottomrule
\end{tabular}
\caption{One-step sampled-pressure closure (``1-step'') versus exact local Riemann closure (``exact'') at the benchmark resolutions. Here $W_\rho$ is the density contact width, $E_{e,\mathrm{ct}}$ is the contact-window internal-energy $L^1$ error, $O_e^+$ is the positive contact-window overshoot (``none'' = no positive overshoot detected), and $E_{rt}$, $E_c$, $E_s$ are the refined fan-tail, contact, and shock location errors. The exact closure is optional: for the present benchmarks it mainly removes the last residual severe-expansion error, while Sod and LeBlanc are already essentially saturated by the one-step version.}
\label{tab:exact-vs-onestep}
\end{table}

\end{document}